\RequirePackage{etex}
\documentclass[a4paper,12pt,reqno,oneside]{amsart}
\usepackage[multiple]{footmisc}

\usepackage{setspace}
\usepackage{amsmath}
\usepackage{amsfonts}
\usepackage{amssymb}
\usepackage{pictex}
\usepackage{}
\usepackage{amsthm}
\usepackage{breakcites}
\usepackage{graphics,epsfig,verbatim,bm,latexsym,url,amsbsy}
\usepackage{rotating}
\usepackage[authoryear,round]{natbib}
\bibliographystyle{ecta}
\usepackage{mathrsfs}
\usepackage{textgreek}
\usepackage{multirow}
\usepackage{graphicx}
\usepackage{subcaption}
\usepackage{array}
\usepackage{url}
\setlength{\textwidth}{6.25in} \setlength{\oddsidemargin}{0.in}
\setlength{\textheight}{9.5in} \setlength{\topmargin}{0in}
\setlength{\evensidemargin}{0.25in}
\setlength{\parskip}{\medskipamount} 
\setlength{\abovedisplayskip}{0pt}
\setlength{\belowdisplayskip}{0pt}

\usepackage{graphicx}


\usepackage{float}
\usepackage{tikz}
\usepackage{subcaption}

\usepackage{xcolor}
\usepackage{bm}
\usepackage{tikz-cd}

\newcommand{\tr}{\mathsf{T}}

\usepackage{pstricks, enumerate, pst-node, pst-text, pst-plot}
\usepackage{xparse}

\newcommand{\dd}[2]{{\left\langle {#1},{#2}\right\rangle}}

\usepackage{graphicx}
\graphicspath{{screenshots/}{images/}} 

\theoremstyle{definition}

\long\def\symbolfootnote[#1]#2{\begingroup%
\def\thefootnote{\fnsymbol{footnote}}\footnote[#1]{#2}\endgroup}

\usepackage{footnote}
\makesavenoteenv{tabular}
\usepackage{fancyhdr}
\lhead{\textsc{\documenttitle}} \chead{} \rhead{\today\ / Page
\thepage\ of \pageref{lastpage}}
\newcommand{\documenttitle}{Thesis}

\renewcommand{\max}{\operatornamewithlimits{max}}

\newcommand{\be}{\begin{equation}}
\newcommand{\ee}{\end{equation}}
\newcommand{\bes}{\begin{equation*}}
\newcommand{\ees}{\end{equation*}}

\usepackage{amsmath}
\usepackage{epsfig,graphics}
\usepackage{etoolbox}
\usepackage[breaklinks]{hyperref}
\usepackage{accents}


\newcommand{\change}[1]{\widehat{#1}}


\newcommand{\rdots}{\mathinner{%
  \mkern1mu\raise1pt\hbox{.}%
  \mkern2mu\raise4pt\hbox{.}%
  \mkern2mu\raise7pt\vbox{\kern7pt\hbox{.}}\mkern1mu}}

\DeclareMathOperator{\Var}{Var}

\usepackage{etex}
\DeclareFontFamily{U}{mathx}{\hyphenchar\font45}
\DeclareFontShape{U}{mathx}{m}{n}{
      <5> <6> <7> <8> <9> <10>
      <10.95> <12> <14.4> <17.28> <20.74> <24.88>
      mathx10
      }{}
\DeclareSymbolFont{mathx}{U}{mathx}{m}{n}
\DeclareFontSubstitution{U}{mathx}{m}{n}
\DeclareMathAccent{\widecheck}{0}{mathx}{"71}
\DeclareMathAccent{\wideparen}{0}{mathx}{"75}


\usepackage{palatino}
\usepackage[margin=1in]{geometry}

\hypersetup{colorlinks=true, urlcolor=black!50!blue, citecolor=black!50!blue, linkcolor=black!50!blue}

\makeatletter
\def\@footnotecolor{purple!30!blue}
\define@key{Hyp}{footnotecolor}{%
 \HyColor@HyperrefColor{#1}\@footnotecolor%
}
\patchcmd{\@footnotemark}{\hyper@linkstart{link}}{\hyper@linkstart{footnote}}{}{}
\makeatother
\hypersetup{footnotecolor=black}

\usepackage{amsthm, thmtools}
\usepackage{
nameref,
hyperref,
cleveref,
}

\newcommand\nc{\newcommand}
\nc\on{\operatorname}
\theoremstyle{definition} \newtheorem{thm}{Theorem}
\theoremstyle{definition} \theoremstyle{definition}
\theoremstyle{definition} 
\theoremstyle{definition} \theoremstyle{remark}
\theoremstyle{definition} 
\theoremstyle{definition} 
\theoremstyle{definition} \theoremstyle{plain}
\theoremstyle{definition} 
\theoremstyle{definition} \newtheorem{lem}{Lemma}
\theoremstyle{definition} 
\theoremstyle{definition} \newtheorem{prop}{Proposition}
\theoremstyle{definition} 
\theoremstyle{definition} \newtheorem{assumption}{Assumption}
\theoremstyle{definition} \newtheorem{example}{Example}
\theoremstyle{definition} 
\newtheorem*{PropertyA}{Property A}
\theoremstyle{definition} 
\theoremstyle{definition} 
\theoremstyle{definition}

\theoremstyle{definition}

\allowdisplaybreaks

\title{Taxes and Market Power: A Principal Components Approach}


\date{\today}
\author[]{Andrea Galeotti \and Benjamin Golub \and Sanjeev Goyal  \and \\  Eduard Talam\`{a}s \and Omer Tamuz}\thanks{ Galeotti is at the London Business School, Golub is at Northwestern University, Goyal is at University of Cambridge, Talam\`{a}s is at IESE Business School, and Tamuz is at Caltech. This work was supported by the ERC grant 724356 (Galeotti), the Pershing Square Fund for Research on the Foundations of Human Behavior (Golub), the National Science Foundation (SES-1658940, SES-1629446, Golub \& DMS-1944153, Tamuz), the Sloan Foundation (Tamuz) and the US-Israel Binational Science Foundation (2018397, Tamuz). }

\begin{document}

\maketitle

\begin{abstract}
Suppliers of differentiated goods make simultaneous pricing decisions, which are strategically linked. Because of market power, the equilibrium is inefficient. We study how a policymaker should target a budget-balanced tax-and-subsidy policy to increase welfare. A key tool is a certain basis for the goods space, determined by the network of  interactions among suppliers. It consists of \emph{eigenbundles}---orthogonal in the sense that a tax on any eigenbundle passes through \emph{only} to its own price---with pass-through coefficients determined by associated eigenvalues. Our basis permits a simple characterization of optimal interventions. A planner maximizing consumer surplus should tax eigenbundles with low pass-through and subsidize ones with high pass-through. The \emph{Pigouvian leverage} of the system---the gain in consumer surplus achievable by an optimal tax scheme---depends only on the dispersion of the eigenvalues of the matrix of strategic interactions. We interpret these results in terms of the network structure of the market.
\end{abstract}

\section{Introduction}

Imperfect competition has attracted renewed attention in recent years, as evidence accumulates of the rising importance of oligopolistic industries  \citep{de2020rise,azarvives2021}. 
As firms in such industries have market power, the market outcome is typically inefficient.\footnote{For recent empirical evidence on the size of the welfare loss see \citet{pelligrino2021} and \citet{edererpelligrino2021}.} This creates the scope for targeted interventions that tax some suppliers/sectors and subsidize others to further a social objective. Surprisingly, very little is known about this classical Pigouvian problem  for oligopolies. This paper addresses it through the lens of networks.

In our model, firms price  their products strategically (in Bertrand competition), and demand arises from the behavior of a representative consumer. Some products may be substitutes and others may be complements. The structure of such relationships determines the strategic interactions among firms. This, in turn, determines the Nash equilibrium prices, how surplus is allocated among producers, and consumer surplus. 

The implications of taxing or subsidizing the suppliers in these markets are complicated, since changes to firms' costs affect the prices and quantities of other firms' through the strategic competition firms face. We summarize the strategic interactions among suppliers in a \emph{spillover matrix} $\bm{D}$ where $D_{ij}$ specifies how the demand of firm $i$ changes when the price of firm $j$ changes. 
We think of $\bm{D}$ as describing a \emph{network structure} on the firms, following a literature in network theory (e.g., \citet{Candoganetal2012} and \citet{BlochQuerou2013}), and analyze certain properties of this network to facilitate the study of optimal tax policies.

The spillover matrix induces a basis of eigenvectors (also called principal components) of the space of goods produced by the strategic firms. We call the vectors in this basis \emph{eigenbundles}. This basis has three special properties. First, when the cost of one eigenbundle is changed (e.g., by taxation), the effect is to change equilibrium prices \emph{only of that eigenbundle}. These eigenbundles therefore identify independent, or orthogonal, dimensions of the market, such that the costs of one eigenbundle do not affect the prices of others. Second, the price pass-throughs (i.e., the fraction of a cost increase that is passed to the consumer) associated with various eigenbundles can be calculated in terms of corresponding eigenvalues of the spillover matrix. Third, the market induces a ranking of pass-throughs: the eigenbundles with larger eigenvalues---which are more representative (in a precise sense) of the demand interdependence across products offered in the market---are those with smaller pass-throughs.

These properties facilitate simple expressions of the effect of tax-subsidy schemes on prices and welfare. That, in turn, permits intuitive  characterization of interventions that maximize consumer surplus. We focus on optimal \emph{small} changes to taxes. Theoretically, the analysis of optimal small interventions requires fewer assumptions  about the structure of demand and technology at allocations far from current conditions than a global analysis would, and comes down to an analysis of marginal effects. Practically, planners may prefer  small adjustments to the status quo rather than implementing a large change all at once---for example, due to political constraints, or because of risk aversion. 

Our study of the small-intervention problem builds on a large literature on the so-called ''tax reform approach'' initiated by \citet{feldstein1976theory} and  \citet{diewert1978optimal}, \citet{dixit1979price} and \citet{tirole1981tax}. Our first contribution here is to formalize the incentives making the planner choose small interventions. We posit that reform is implemented with some noise and the policymaker is risk-averse over the welfare criterion of interest (e.g., consumer surplus). As the risk-aversion increases, the planner prefers smaller interventions. We formalize this problem and characterize optimal tax interventions.

We obtain the following economic insights. First, through the lens of our decomposition, the optimal intervention may be described as follows: High pass-through eigenbundles are subsidized, with revenue collected from low pass-through eigenbundles. Thus, taxes levied on the low pass-through eigenbundles are mostly paid by the firms, while the subsidies offered to the high pass-through eigenbundles are mostly passed on to the consumer. This results in increased consumer welfare. As we have mentioned, the ranking of eigenbundles by pass-throughs is the same order of the corresponding eigenvalues of the spillover matrix.

Second, we characterize the gain in surplus that a planner can achieve through a small tax intervention. This depends on the scale of the intervention that a planner is willing to implement, but beyond this it also depends on the network: some networks permit the same planner to achieve a much greater gain, offering a bigger ''bang for the buck.''  The name we give to this ''bang for the buck'' measure is the \emph{Pigouvian leverage}.

Our third contribution is to show that the \textbf{Pigouvian leverage} depends in a simple way on the eigenvalues of the spillover matrix: It is proportional to the sample variance of the eigenvalues of this matrix. An equivalent characterization is that the \textbf{Pigouvian leverage} is proportional to the sum of the squares of the off-diagonal elements of the spillover matrix, which is a natural measure of the intensity of strategic effects. The ''overall magnitude'' of spillovers thus determines how much scope there is for improving consumer surplus. Notably, the \emph{signs} of strategic interactions do not matter for this measure. Whether firms' prices are strategic complements or substitutes, it is only an aggregate measure of the intensity of strategic interactions that determines the Pigouvian leverage. This gives an economically substantive characterization of when planners can achieve large benefits for consumers, but also illustrates the value of the spectral approach in generating such economic insights.

We conclude by exploring \textbf{non-small optimal interventions}. To do this, we focus on the classical case of \textbf{linear demand}. The characterization is again in terms of the spectral decomposition of demand spillovers, and echoes the result in the small intervention regime. The optimal policy still collects tax revenue from the eigenbundles with low pass-through, where the impact on prices and output is relatively small, and allocates them---via subsidies---to the eigenbundles with high pass-through, where the impact on prices and output is relatively large. The fact that the deadweight loss from taxation increases as a quadratic function of its size determines the optimal size of this tax program. 

\subsection{Related literature}

Our paper contributes to a literature on the structure and theoretical properties of market power. For an early theoretical  paper see \citet{dixit1986comparative}; more recent studies include, for example, \citet{vives1999oligopoly} and \citet{azarvives2021}.  A recent literature in macroeconomics and industrial organization uses network models of differentiated oligopoly, with models similar to the one studied here, to provide empirical estimates of  welfare losses due to market power (see e.g., \citet{pelligrino2021} and \citet{edererpelligrino2021}).\footnote{See also \citet{elliott2019role} for related arguments on how network methods can be useful for competition authorities in developing antitrust investigations.}

Our paper makes two contributions to this literature. First, taking a network perspective, we provide a geometric approach to analyzing pass-through in the pricing game with market power.   This builds on work  emphasizing the value of pass-through as a conceptual tool (e.g., \citet{weyl2013pass} and \citet{miklos2021pass}), and shows how it can be described tractably and intuitively in markets with very rich heterogeneity.  Second, motivated by the empirical research on welfare losses due to market power, we apply our decomposition to characterize  policy interventions that maximize consumer surplus and shed light on the economic forces that make them most effective. 

In taking a network approach to inefficiencies in market interactions among firms we also connect to work in macroeconomics and trade on production networks (e.g., \citet{baqaee2018}, \citet{grassi2017} and \citet{liu2019} and \citet{grassisauvagnat2019}). This work highlights the importance of the interaction between market structure and production networks for a variety of outcomes, especially the amplification of productivity shocks. One important contrast is that those models take pricing distortions as given, rather than endogenizing prices as we do in a Bertrand model.



More generally, our paper contributes to the theory of network interventions; prominent early contribution to this theory include \citet{Borgatti2006} and \citet*{Ballester2006}.\footnote{The literature on this subject is very large. Other contributions of network intervention in models of information diffusion, advertising, and pricing include \citet*{banerjee2013diffusion}, \citet{BelhajDeroian}, \citet{BlochQuerou2013}, \citet*{Candoganetal2012}, \citet{GDemange}, \citet{FainmesserGaleotti2017}, \citet{GaleottiGoyal2009}, \citet{GaleottiRogers}, and \citet*{leduc2017pricing}.} In a recent paper, \citet*{galeotti2020targeting} study intervention in a certain canonical class of quadratic network games. They use the singular value decomposition of the interaction matrix for the study of (costly) interventions that alter the stand-alone marginal benefits of individual activity. Their intervention problem takes a stylized form that cannot (even approximately) accommodate the notion of a budget-balanced tax scheme. As a result, the characterizations of optimal policy in our setting are distinctive, for example the finding that the planner subsidizes some eigenbundles and taxes others. At a conceptual level, we introduce the idea of Pigouvian leverage---how much the planner can achieve with budget-balanced interventions. This notion, and the characterizations that we provide of how it relates to the market structure, are one of the main contributions of the present paper.

\section{Price pass-through and oligopoly competition}

Symbols denoting vectors and matrices are in bold. For any matrix $\bm{M}$, the symbol $m_{ij}$ stands for its element in the $i$th row and $j$th column, and $\bm{M}^\tr$ denotes its transpose.
The symbol $\dd{\bm{a}}{\bm{b}}$ denotes the dot product of $\bm{a}$ and $\bm{b}$.


\subsection{Market structure} \label{sec:structure}

The set of \emph{goods}  is $\mathcal{N}=\{1,2,\ldots,N\}$. Good $i$ is produced  by firm $i$. The firms engage in a Bertrand pricing game, simultaneously choosing prices $\bm{p}=(p_i)_{i \in \mathcal{N}}$. Firm $i$ has profit function 
$$ 
  \pi_i(\bm{p}) = q_i(p_i - c_i),
$$
where $q_i$ is the quantity demanded of good $i$, and $c_i$ is the (constant) marginal cost of production.

There is a representative consumer with a quasi-linear utility function that depends on goods, described by a quantity vector $\bm{q}$, and a separate commodity called the numeraire, whose quantity is denoted by $m$:
$$
  \mathcal{U}(\bm{q},m) = \mathcal{V}(\bm{q}) + m.
$$ Here $\mathcal{V}$ is a twice-differentiable, concave utility function.

\begin{example}\label{ExQU} For a concrete example, the reader may keep in mind the classical case in which the utility of the representative consumer is quadratic in consumption:
$$\mathcal{V}(\bm{q}) = \bm{\beta}^\tr \bm{q}-\frac{1}{2}\bm{q}^\tr \bm{B} \bm{q},$$ where $\bm{\beta}$ is a vector with positive entries and $\bm{B}$ is a symmetric, positive definite matrix. \end{example}

Returning to the general case, the consumer chooses an optimal bundle, paying price $p_i$ for commodity $i$ in terms of the numeraire. We assume that $\mathcal{V}(\cdot)$ is such that, for any  price vector $\bm{p}$, there is a unique, interior demand profile solving the consumer's problem
\begin{equation}\label{eq:utilitymax}
\max_{\bm{q}} \; \; \mathcal{V}(\bm{q})-\langle\bm{q},\bm{p}\rangle.
\end{equation}
Let $q_i(\bm{p})$ be the quantity demanded of good $i$ at the profile of prices $\bm{p}$. 


\subsection{Equilibrium}

We focus on an (interior pure-strategy Nash) equilibrium $\bm{p}^*$ of the price-setting game between the firms. 
The first-order conditions that characterize an interior equilibrium are:
\begin{equation}\label{FOC}
q_i(\bm{p}^*)+\frac{\partial q_i(\bm{p}^*)}{\partial p_i}(p_i^*-c_i)=0 \quad \text{for all}\  i\in\mathcal{N}.
\end{equation}

We take the functions $q_i$ as fixed and take the parameters of the game to be the marginal costs $\bm{c}$ (which will be modified by tax/subsidy interventions). We make the following assumptions for technical convenience.
\begin{assumption} \label{ass:eqmexist} There is an open set $\mathfrak{C}$ of costs such that, for all $\bm{c} \in \mathfrak{C}$,
\begin{enumerate}
\item an interior equilibrium $\bm{p}^*$  exists and is unique;
\item the demand function $\bm{q}(\bm{p})$ is locally linear in a neighborhood of equilibrium prices $\bm{p}^*$;
\item each $q_i(\bm{p})$ is strictly decreasing in each $p_i$. 
\end{enumerate} \end{assumption}

We assume throughout the analysis that the status quo is at some $\bm{c} \in \mathfrak{C}$, and we then perturb the market relative to that starting point. This model is a standard differentiated oligopoly (e.g., \citet{vives1999oligopoly} and \citet{chone2020linear}). Firms produce products that can be complements and substitutes in consumption and, in turn, this determines strategic interactions across the firms. 

\subsection{Pass-through}\label{sec:Basic_pass}

Slightly abusing terminology, we often denote equilibrium prices $\bm{p}^*$ and quantities $\bm{q}(\bm{p}^*)$ by $\bm{p}$ and $ \bm{q}$, respectively. 

We are interested in how a change in production costs (e.g., because of taxation) passes through to equilibrium prices, and how it changes welfare. To this end,  we introduce a dummy parameter $\xi$ and posit that costs vary according to given functions $\bm{c}(\xi)$. We let $\dot{x}$ denote $\frac{dx}{d\xi}$ for any variable $x$ in the model.  The vector $$ \dot{\bm{c}}=(\dot{c}_1, \dot{c}_2, \dots, \dot{c}_N),$$ which will be specified exogenously, determines the first-order response of the economy. To analyze this response, we will calculate the derivatives in $\xi$ of other variables in the model.

Totally differentiating (\ref{FOC}) around the equilibrium $\bm{p}$ yields, {using the local linearity assumption}:
\begin{equation}\label{totaldifferentiation1}
\sum_{j\in \mathcal{N}} \frac{\partial q_i(\bm{p})}{\partial p_j} {\dot{p}_j}  +\frac{\partial q_i(\bm{p})}{\partial p_i}(\dot{p}_i-\dot{c}_i)=0.
\end{equation}
To write this equation in vector notation, we  define $$ D_{ij}(\bm{p}) = \frac{\partial q_i(\bm{p})}{\partial p_j} \quad \forall i,j\in \mathcal{N},$$ which is element $ij$ of the the Jacobian of the vector $\bm{q}$ in the prices $\bm{p}$; let $\bm{D}$ be the respective $N\times N$ matrix.

Because demand arises from the preferences of a representative consumer (with twice-differentiable, concave utility), the matrix $\bm{D}$ is symmetric and negative semidefinite \citep{nocke2017quasi}. We also introduce a normalization that is useful in the rest of our analysis; Appendix \ref{sec:normalization} shows that this normalization is without loss of generality, as it can be achieved by changing the units of quantity of each good. We summarize this in the following property:
\begin{PropertyA}\label{Ass:psd} $\bm{D}$ is a symmetric negative semidefinite matrix and it satisfies the normalization $D_{ii}=-1$ for each $i\in \mathcal{N}$.
\end{PropertyA}

\autoref{totaldifferentiation1} becomes
\begin{equation}\label{totaldifferentiation2}
\dot{p}_i =\dot{c}_i + \sum_{j\in \mathcal{N}} D_{ij} {\dot{p}_j}, 
\end{equation}
which illustrates how the the strategic relations between any two suppliers $i$ and $j$ are captured by the sign of $D_{ij}$. We shall say that two distinct goods $i$ and $j$ are \textit{strategic substitutes} if $D_{ij}$ is negative and \textit{strategic complements} if $D_{ij}$ is positive. (Note that the strategic spillover is ``opposite'' to the interaction of the goods in the consumer's utility: if two goods are complements in consumption, they are strategic substitutes in the pricing game.)
The prices pass-through can be expressed in matrix form as follows:
\begin{equation}
\label{PT}
[\bm{I}-\bm{D}]\dot{\bm{p}}=\dot{\bm{c}}.
\end{equation}

Note that when demand is linear (the utility of the representative consumer is quadratic in consumption as in Example \ref{ExQU}), the pass-through equation \ref{PT} holds exactly when ${\dot{\bm{c}}}$ is understood as an arbitrary change.

\subsection{Pass-through in terms of eigenbundles}\label{sec:pass}

We now express the pass-throughs of cost changes in a compact way by changing to a convenient basis of \emph{eigenbundles}.

Property A implies that $\bm{D}$ is orthogonally diagonalizable. That is, there exists an $N\times N$ orthonormal matrix  $\bm{U}$ such that $$\bm{D}=\bm{U} \bm{\Sigma} \bm{U}^\tr,$$ where $\bm{\Sigma}$ is an $N\times N$ diagonal matrix whose $\ell^{\text{th}}$ diagonal element is the $\ell^{\text{th}}$-largest eigenvalue of $\bm{D}$, called $\sigma_\ell$; it is nonpositive because $\bm{D}$ is negative semidefinite. The ${\ell}^{\text{th}}$ column $\bm{u}^{\ell}$ is the eigenvector of $\bm{D}$ corresponding to $\sigma_\ell$.  We call this the ${\ell}^{\text{th}}$ \emph{eigenbundle} of $\bm{D}$. These vectors have norm $1$ and are orthogonal to each other.\footnote{Eigenbundles may have entries with both positive and negative signs and these signs can be interpreted as combining long holdings of goods (those with positive signs) and short holdings of goods (those with negative signs).}

For any $\bm{x}\in \mathbb{R}^N$, let $\underline{\bm{x}}$ denote $\bm{U}^\tr \bm{x}$; that is, $\underline{\bm{x}}$ is the vector $\bm{x}$ expressed in the basis $\bm{U}$.
We choose the signs of $\bm{u}^{\ell}$ so that the equilibrium quantities of eigenbundles are all positive, i.e., $\underline{q}_{\ell}\geq 0$ for each $\ell \in \mathcal{N}$.

The \emph{consumer surplus} associated to an price profile $\bm{p}$ is defined to be $$\mathcal{C}=\mathcal{V}(\bm{q}(\bm{p}))-\langle\bm{q}(\bm{p}),\bm{p}\rangle.$$ This is the net utility to the consumer after paying for the goods.

\begin{prop} \label{prop:eig} Fixing any $\dot{\bm{c}}$, the derivative of the equilibrium price of the $\ell^{\text{th}}$ eigenbundle has the following form:
\begin{equation} \label{eq:key}
\dot{\underline{{p}}}_{\ell}=\lambda_{\ell} \dot{\underline{{c}}}_{\ell}, 
\end{equation} 
where $$\lambda_\ell=\frac{1}{1-\sigma_\ell}$$ is increasing in $\ell$.
The associated pass-through to consumer surplus is
$$
  \dot{\mathcal{C}}=- \sum_\ell \lambda_{\ell}\underline{q}_{\ell} \dot{\underline{{c}}}_{\ell}. 
$$
\end{prop}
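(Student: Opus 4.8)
The plan is to reduce the pass-through equation \eqref{PT} to a diagonal system by passing to the eigenbundle basis, and then to obtain the consumer-surplus derivative by an envelope-theorem argument followed by the same change of basis.

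First I would rewrite \eqref{PT}, $[\bm{I}-\bm{D}]\dot{\bm{p}}=\dot{\bm{c}}$, in the basis $\bm{U}$. Substituting $\bm{D}=\bm{U}\bm{\Sigma}\bm{U}^\tr$ and left-multiplying by $\bm{U}^\tr$, and using $\bm{U}^\tr\bm{U}=\bm{I}$, gives $[\bm{I}-\bm{\Sigma}]\bm{U}^\tr\dot{\bm{p}}=\bm{U}^\tr\dot{\bm{c}}$, i.e. $[\bm{I}-\bm{\Sigma}]\dot{\underline{\bm{p}}}=\dot{\underline{\bm{c}}}$. Since $\bm{D}$ is negative semidefinite (Property A), each $\sigma_\ell\le 0$, so $1-\sigma_\ell\ge 1>0$ and $\bm{I}-\bm{\Sigma}$ is an invertible diagonal matrix; reading off the $\ell$th coordinate yields $\dot{\underline{p}}_\ell=(1-\sigma_\ell)^{-1}\dot{\underline{c}}_\ell=\lambda_\ell\dot{\underline{c}}_\ell$, which is \eqref{eq:key}. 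Invertibility of $\bm{I}-\bm{D}$ (again from negative semidefiniteness) is also what guarantees that $\dot{\bm{p}}$ is well defined given $\dot{\bm{c}}$; this is where the local-linearity part of Assumption~\ref{ass:eqmexist} enters. Monotonicity of $\lambda_\ell$ in $\ell$ is then immediate, since $t\mapsto (1-t)^{-1}$ is increasing on $(-\infty,0]$ and the eigenvalues $\sigma_\ell$ are arranged monotonically along the diagonal of $\bm{\Sigma}$.

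Next, for the surplus pass-through, I would differentiate $\mathcal{C}=\mathcal{V}(\bm{q}(\bm{p}))-\dd{\bm{q}(\bm{p})}{\bm{p}}$ in $\xi$ by the chain rule, obtaining $\dot{\mathcal{C}}=\dd{\nabla\mathcal{V}(\bm{q})}{\dot{\bm{q}}}-\dd{\dot{\bm{q}}}{\bm{p}}-\dd{\bm{q}}{\dot{\bm{p}}}$. The consumer's first-order condition for \eqref{eq:utilitymax} is $\nabla\mathcal{V}(\bm{q}(\bm{p}))=\bm{p}$ (an interior optimum is assumed to exist and be unique), so the first two terms cancel and $\dot{\mathcal{C}}=-\dd{\bm{q}}{\dot{\bm{p}}}$. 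This is the one step that deserves care: it is an envelope argument, and it relies on the differentiability hypotheses on $\mathcal{V}$ and on $\bm{q}(\cdot)$ near $\bm{p}^*$, which ensure $\dot{\bm{q}}$ exists and that the chain rule applies.

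Finally I would pass to eigenbundle coordinates: because $\bm{U}$ is orthonormal, $\dd{\bm{q}}{\dot{\bm{p}}}=\dd{\bm{U}^\tr\bm{q}}{\bm{U}^\tr\dot{\bm{p}}}=\dd{\underline{\bm{q}}}{\dot{\underline{\bm{p}}}}=\sum_\ell\underline{q}_\ell\dot{\underline{p}}_\ell$, and substituting \eqref{eq:key} gives $\dot{\mathcal{C}}=-\sum_\ell\lambda_\ell\underline{q}_\ell\dot{\underline{c}}_\ell$, as claimed. Beyond the envelope step, I do not expect any real obstacle; the remainder is bookkeeping with the orthogonal change of basis.
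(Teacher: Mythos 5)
Your proposal is correct and follows essentially the same route as the paper: diagonalize $[\bm{I}-\bm{D}]\dot{\bm{p}}=\dot{\bm{c}}$ via $\bm{D}=\bm{U}\bm{\Sigma}\bm{U}^\tr$ to get $\dot{\underline{\bm{p}}}=(\bm{I}-\bm{\Sigma})^{-1}\dot{\underline{\bm{c}}}$, then apply the envelope theorem to the consumer's problem (so $\nabla\mathcal{V}(\bm{q})=\bm{p}$ and $\dot{\mathcal{C}}=-\langle\bm{q},\dot{\bm{p}}\rangle$) and change basis using orthonormality of $\bm{U}$. One small caveat: with the paper's convention that $\sigma_\ell$ is the $\ell^{\text{th}}$-largest eigenvalue, $\lambda_\ell=(1-\sigma_\ell)^{-1}$ is in fact monotone \emph{decreasing} in $\ell$ (as the paper's own example shows), so your monotonicity remark---like the proposition's ``increasing'' wording---should be stated with the ordering made explicit; the paper's proof simply omits this point.
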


\begin{proof} 
From (\ref{PT}) we get
$\left(\bm{I}- \bm{U} \bm{\Sigma} \bm{U}^\tr \right) \dot{\bm{{p}}}=\dot{\bm{{c}}}.$
Multiplying both sides by $\bm{U}^\tr$ we get $$\bm{U}^\tr\left(\bm{I}- \bm{U} \bm{\Sigma} \bm{U}^\tr \right)\bm{U} \bm{U}^\tr \dot{\bm{{p}}}=\bm{U}^\tr \dot{\bm{{c}}},$$ that is, $\dot{\underline{\bm{{p}}}}=(\bm{I}-\bm{\Sigma})^{-1} \dot{\underline{\bm{{c}}}}.$ 

Turning to consumer surplus, using \autoref{eq:utilitymax} we have that
\begin{eqnarray*}
\dot{\mathcal{C}}&=&\sum_i \left(\frac{\partial \mathcal{V}(\bm{q})}{\partial q_i}\dot{q}_i-q_i\dot{p}_i-\dot{q}_ip_i\right)\\
&=&\sum_i \left(p_i\dot{q}_i-q_i\dot{p}_i-\dot{q}_ip_i\right)\\
&=&-\langle \dot{\bm{p}},\bm{q} \rangle,
\end{eqnarray*}
where the second equality follows by the envelope theorem applied to \autoref{eq:utilitymax} (which implies that $\frac{\partial \mathcal{V}(\bm{q})}{\partial q_i}=p_i$). Changing to the basis of $\bm{U}$ gives the result.\end{proof}

The usefulness of the eigenbundles is that---as \autoref{prop:eig} states---a change in the cost of one of them affects only its own prices with a certain coefficient $\lambda_\ell$, called the \emph{price pass-through}. The price pass-throughs of the eigenbundles are ordered according to their corresponding eigenvalues: the larger the $\ell^{\text{th}}$ eigenvalue $\sigma_\ell$, the higher the pass-through from changes in the cost of the $\ell^{\text{th}}$ eigenbundle to its equilibrium price.

The following example illustrates the proposition. 

\subsubsection{Example} \label{sec:ex}
Consider an oligopoly model with three products. Product $1$ and $2$ are independent; product $3$ is a complement to product $1$ and to product $2$. For example, product 1 is a video game, product 2 is a word processor, while product 3 is a computer. This represented by the following matrix (which is normalized):
	$$
	 \bm{D}= - \left[ \begin{array} {ccc} 1 & 0 &1/\sqrt{2}\\ 0 & 1&1/\sqrt{2}\\ 1/\sqrt{2} & 1/\sqrt{2}&1 \end{array}\right]
	$$
Figure \ref{fig:sim} depicts the network associated with $\bm{D}$ (omitting self-loops). Note that even if product $1$'s demand is independent of product $2$'s price and vice-versa, a shock that affects, say, the price of product $1$, will change product $3$'s price and, consequently, will affect the pricing strategy of the producer of product $2$.

\begin{figure}
		\begin{tikzpicture}[scale=.5]
\node[rectangle, draw] (s1) at (-4,4) {Product $1$ (e.g., video game)};
\node[rectangle, draw] (s2) at (12,4) {Product $2$ (e.g., word processor)};

\node[rectangle, draw] (s3) at (4,0) {Product $3$ (e.g., computer)};
\draw[very thick] (s1)--(s3);
\draw[very thick]  (s2)--(s3);
\end{tikzpicture}
\caption{A three-product oligopoly.
} \label{fig:sim}
\end{figure}
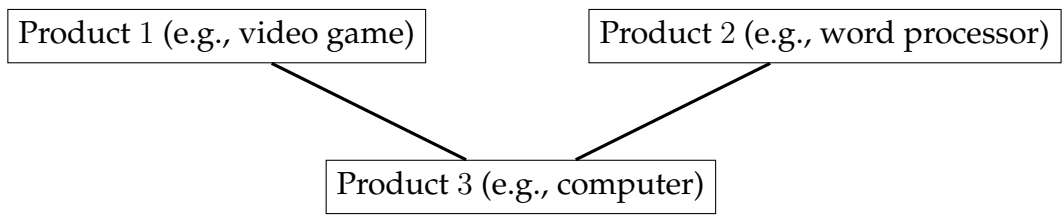

Proposition \ref{prop:eig} tells us that we can take any shock in the economy and decompose it into three shocks---each proportional to one the three eigenbundles of $\bm{D}$---and study the pass-through within each eigenbundle independently. We do this next. We will then compute the overall price pass-through of the shock to each of the three product as a combination of the price pass-throughs within each eigenbundle.

The first eigenbundle of $\bm{D}$ is  $\bm{u}^1=(1/2,1/2,-1/\sqrt{2})$. A cost shock proportional to $\bm{u}^1$ changes the prices of the two independent goods in the same direction and changes the price of good $3$ in the opposite direction. The strategic substitutability between these products amplifies this divergence in the price of the two independent products viz. product $3$. Initial shocks are amplified. In fact, since the eigenvalue $\sigma_1=0$, we have that $\lambda_1=1$ and so there is a $100\%$ price pass-through along this dimension.

The second eigenbundle of $\bm{D}$ is  $\bm{u}^2=(1/\sqrt{2},-1/\sqrt{2},0)$. A cost shock proportional to $\bm{u}^2$ moves prices of the two independent goods in opposite directions---without affecting the price of product $3$. Moreover, as product $3$ is complementary to both $1$ and $2$ (in a symmetric way) the opposite price changes of $1$ and $2$ keeps the price of producer $3$ constant. We obtain that pass-through in this eigenbundle is like the (normalized) monopoly pass-through of $1/2$. Indeed, the second eigenvalue is $\sigma_2=-1$, and so $\lambda_2=1/2$.

Finally, the third eigenbundle of $\bm{D}$ is  $\bm{u}^3=(1/2,1/2,1/\sqrt{2})$. It ranks products in terms of their eigenvector centrality. A cost shock proportional to $\bm{u}^3$ will therefore co-move the price of the three producers in the same direction. These effects are dampened by their strategic reaction, so the resulting pass-through to prices is relatively small. Indeed, the eigenvalue is $\sigma_3=-2$ and so only $\lambda_3=1/3$ of the shock passes through to the price of the third eigenbundle.

We now derive the price implications  of a specific marginal shock to costs---as well as the overall effect of this shock on consumer surplus. Consider a shock that marginally increases the cost of product $3$, i.e., $\dot{\bm{c}}=\{0,0,\dot{c}_3\}$. Then we have that $\dot{\bm{\underline{c}}}=\{-\dot{c}_3/\sqrt{2},0,\dot{c}_3/\sqrt{2}\}$ and $\dot{\bm{\underline{p}}}=\{-\dot{c}_3/\sqrt{2},0,\dot{c}_3/(3\sqrt{2})\}$.

The final price pass pass-throughs to the three products can be obtained by combining the different eigenbundles' pass-throughs with how represented each of the products is in them, i.e., $\dot{p}_{i}=\sum_\ell \bm{u}^\ell_i \dot{\bm{\underline{p}}_\ell}$. That is
\begin{eqnarray*}
\dot{p}_1=\dot{p}_2&=&\frac{\dot{c}_3}{2}\left(\frac{1}{3\sqrt{2}}-\frac{1}{\sqrt{2}}\right)=- \frac{\dot{c}_3}{3\sqrt{2}}\\
\dot{p}_3&=&\frac{\dot{c}_3}{\sqrt{2}}\left(\frac{1}{3\sqrt{2}}+\frac{1}{\sqrt{2}}\right)=\frac{2  \dot{c}_3}{3}
\end{eqnarray*}
Finally, the resulting change in consumer surplus is approximately the sum of the price change within each eigenbundle times the equilibrium demand of this eigenbundle. The first eigenbundle is shocked negatively, $\underline{\dot{\bm{c}}}_1=-\dot{c}_3/\sqrt{2}$, so the prices of the two independent goods decrease and the price of the complement good 3 increases, increasing  consumer surplus by $\dot{c}_3\underline{q}_1/\sqrt{2}$. The second eigenbundle is not shocked, i.e., $\underline{\dot{\bm{c}}}_2=0$, and so there is no effect on consumer surplus. The cost shock to the third eigenbundle $\underline{\dot{\bm{c}}}_3=\dot{c}_3/\sqrt{2}$ is positive and it increases prices and so it has a negative effect on consumer surplus ($-\dot{c}_3\underline{q}_3/(3\sqrt{2})$). Aggregating these three effects we obtain
\[
\dot{\mathcal{C}}=\frac{\dot{c}_3}{\sqrt{2}}\left(\underline{q}_1-\frac{1}{3}\underline{q}_3\right)=-\frac{\dot{c}_3}{3\sqrt{2}}\left(\frac{4}{\sqrt{2}}q_3-(q_1+q_2)\right)
\]
If the initial equilibrium demand for the complementary product $3$ is at least (roughly) 35\% of the total quantity demanded of product $1$ and product $2$, then the increase in the cost of product $3$ has a negative effect on consumer surplus; otherwise the effect is positive. In particular, when the demand for products 1 and 2 is relatively large, an increase in the cost of product 3 can---via the associated strategic response of the producers of products 1 and 2, who decrease prices---increase consumer welfare. 

\subsubsection{Comments on the meaning of eigenbundles} 
We now develop the intuition for the decomposition of pass-through into eigenbundles, building on the ideas shown in the example. For concreteness, we consider a case where all products are strategic substitutes ($D_{ij} \leq 0$ $\forall i,j$, corresponding to complements in consumption), and we focus on the first and the last eigenbundles.

The last eigenbundle $\bm{u}^n$ (the one associated with the highest eigenvalue in absolute value) has all entries positive. It corresponds to the eigenvector centralities in $-\bm{D}$ of various products. Producer $i$ is highly central---that is, ${u}^n_i$ is high---when the good that she produces is highly complementary to other goods which are also highly complementary to other goods, etc. 

Consider a shock that changes the producers' marginal costs proportionally to their eigenvector centralities. The first effect is to move the price of all producers in the same direction. Since producers sell complementary products, the underlying pricing game is one of strategic substitutes. Hence, the initial co-movement in price will be attenuated by producers' strategic responses. Since the first eigenbundle captures the extent to which each product is complementary to others taking into account the whole structure of feedback, this is the eigenbundle in which strategic responses  most strongly dampen the price pass-through of a shock.

In contrast, the first eigenbundle $\bm{u}^1$ reflects a very different aspect of the structure of $\bm{D}$. Its entries have \emph{opposite} signs for complementary goods insofar as possible.\footnote{See \citet{bramoulle2014strategic} and \citet{galeotti2020targeting} for extended discussions of the structure of such vectors.}
A shock that is proportional to $\bm{u}^1$ typically changes the prices of complementary goods in opposite directions. Since prices of such goods are strategic \emph{substitutes}, these initial price effects are amplified; second-order effects go in the same direction as first-order effects, reinforcing them further. So the corresponding pass-through to prices is large.

\section{Piguvian intervention: Small tax reforms}

Because firms have market power, the equilibrium will typically be inefficient. Thus there is scope for a planner to suitably choose taxes and subsidies at the firm level to improve welfare. We consider Pigouvian interventions that attempt to increase consumer surplus, but the analysis can be easily extended to other surpluses. We also focus on interventions that are:

\begin{itemize}

\item {\bf Budget neutral.} A planner with a loose budget constraint can subsidize all firms at a high rate and increase consumer surplus substantially. However, under a tight budget constraint, the planner will need to impose positive taxes on some firms in order to subsidise others. To understand these trade-offs we focus on the case that the planner must run a balanced budget; allowing the planner to run a small deficit will not change the qualitative nature of the results.

\item {\bf Small.} There are two reasons to focus on small interventions. One is realism, as an administrative authority may be unwilling to make large changes.  Secondly, if changes are constrained to be small, only properties of the market local to the equilibrium outcome are relevant, and these are much more feasible to assess. In fact, by focusing on small interventions we can derive economic insights which do not depend on the specific parametrization of the demand function of the representative consumer. As we shall see these results capture properties of global interventions that can be derived in special cases; see Secsion \ref{sec:linear} where we develop a non-small optimal intervention for the special case of linear demand. 
\end{itemize}

We begin by giving a simple foundation under which a planner would indeed choose to enact small changes and then we characterise the optimal intervention.

Let $\bm{\tau}=(\tau_1,\dots, \tau_n)$ be the profile of per-unit taxes that are assessed. This means that if ${q}_i$ units are produced of input $i$, then the planner collects $\tau_i q_i$ units of revenue, which may be negative if $\tau_i <0$. The planner does not have perfect control over the actual taxes assessed. What the planner chooses is a \emph{target} tax $\overline{\tau}_i$ for each $i$, and the actual tax assessed is 
$$ 
  \tau_i = \overline{\tau}_i \eta_i,
$$ 
where $\eta_i$ is an independent random variable with mean $1$, variance $\nu^2$, and support bounded by $S \nu^2$ for some large $S$.\footnote{The assumption of bounded support is a technical one that can be relaxed at the cost of some additional details to be dealt with in proofs.} On average, the planner gets what she wants, but there are shocks that perturb the implemented tax some amount from its intended target. This $\tau_i$ represents \emph{implementation noise}, which for technical reasons enters in a multiplicative way.

The timing is that the planner sets the profile $(\overline{\tau}_i)_i$, the implementation noise $\eta_i$ is realized for every $i$, then equilibrium outcomes are determined, and payoffs are received.

We assume that the planner seeks to increase consumer surplus, but is risk-averse over this outcome. Let $\change{{\mathcal{C}}}$ denote  the change in consumer surplus, which is a random variable because the implemented tax is random.  The simple specification we use is that the planner's value function depends on the mean and variance of $\change{{\mathcal{C}}}$ as follows: \begin{equation} \mathcal{W} = \mathbb{E} \left[ {\change{\mathcal{C}}} \right] -  \frac{a}{2} \operatorname{Var} \left[ {\change{\mathcal{C}}} \right]. \label{eq:planner_W}\end{equation}  Thus, the planner cares about improving consumer surplus but is risk-averse and dislikes changes that lead to a highly variable consumer surplus. The parameter $a$ measures the extent of the planner's risk-aversion.  

As we will see, when $a$ is large, then the planner chooses a vector of taxes and subsidies whose norm is small. The risk-aversion in the planner's payoff function can be seen as capturing, for example, political considerations: if ex post the penalty for a harmful reform is greater than the reward for a successful one, then it is intuitive that a planner will be cautious. We will see that this means that the planner will pursue first-order gains that small tax schemes allow, while limiting their size to avoid exposure to risk. This feature also contributes to the tractability of the model, allowing us to capture the main economic insights in a closed-form solution of an optimization problem. 

\medskip

Denote by $ R$ the net revenue of the tax to the government. We require that the planner runs a balanced budge in expectation. Thus, the planner's problem, for a given level $a$ of risk-aversion, is
\begin{equation} \text{choose } \overline{\bm{\tau}} \text{ to } \operatorname{max} \mathcal{W} \text{ subject to }  \mathbb{E}[ R]=0.  \label{eq:plannersproblemnew} \tag{\text{T}(a)} \end{equation}

Our result on the form of the optimal intervention involves a statistic of the spillover matrix, which we now define. Let $\operatorname{var}[{\bm{\sigma}}]$ denote the sample variance of the eigenvalues $(\sigma_1,\ldots,\sigma_N)$, i.e., 
$$ 
\operatorname{var}[{\bm{\sigma}}] = \frac{1}{N}\sum_{i=1}^N \sigma_i^2 - \left(\frac{1}{N}\sum_{i=1}^N \sigma_i\right)^2.
$$
Since $\bm{D}$ is symmetric with $D_{ii}=-1$, it is easy to verify that
$$
  \operatorname{var}[{\bm{\sigma}}] = \frac{1}{N}\sum_{i \neq j}D_{ij}^2.
$$
Thus, $\operatorname{var}[{\bm{\sigma}}]$ captures the magnitude of the total strategic interaction---either substitutes or complements---of the different producers.

As we now show, the eigenvalue variance plays a key role in characterizing both the form of the optimal intervention and its efficacy.

\begin{thm}\label{prop:local}
For large enough $a$, the optimal policy taxes each eigenbundle $\ell$ with $\lambda_{\ell}<z$ and subsidizes each eigenbundle $\ell$ with $\lambda_{\ell}>z$,
where $z$ is the the shadow price of the budget-balance constraint and is equal to:
$$ z = \left[ 2+ \frac{1}{2} \operatorname{var}[{\bm{\sigma}}]\right]^{-1}.$$ 
Furthermore, in the limit $a \to \infty$ of small interventions, the optimal policy sets target taxes $\overline{\underline{\tau}}_\ell$ so that
\begin{itemize}
\item the tax raised from the $\ell^{\text{th}}$ eigenbundle satisfies
$$\lim_{a\to\infty} a \underline{q}_\ell \overline{\underline{\tau}}_\ell = {(z-\lambda_\ell)} (1+\sigma_\ell)^2,$$
\item the change in consumer surplus satisfies \begin{equation} \label{eq:change_surplus_small}\lim_{a\to\infty}a \change{\mathcal{C}}^* = N \frac{\operatorname{var}[\bm{\sigma}]}{4+\operatorname{var}[\bm{\sigma}]}.\end{equation} 
\end{itemize}
  \end{thm}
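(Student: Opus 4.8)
The plan is to reduce the stochastic, budget-constrained program $\mathrm{T}(a)$ to an explicit strictly concave quadratic program as $a\to\infty$, solve that program in closed form in the eigenbundle basis, and then recover the finite-$a$ sign statement by a continuity argument.

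\emph{Step 1: linearize the policy-to-outcome map, then take moments.} A per-unit tax is a cost shock, $\dot{\bm c}=\bm\tau$, so \autoref{prop:eig} gives, in eigenbundle coordinates, $\dot{\underline{p}}_{\ell}=\lambda_\ell\underline{\tau}_\ell$, $\dot{\underline{q}}_{\ell}=\sigma_\ell\lambda_\ell\underline{\tau}_\ell$ (from $\dot{\underline{\bm q}}=\bm\Sigma\dot{\underline{\bm p}}$), a first-order consumer-surplus change $-\sum_\ell\lambda_\ell\underline{q}_\ell\underline{\tau}_\ell$, and first-order revenue $\langle\bm\tau,\bm q\rangle=\sum_\ell\underline{\tau}_\ell\underline{q}_\ell$. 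Using local linearity (\autoref{ass:eqmexist}) I would expand $\widehat{\mathcal C}$ and $R$ to second order in $\bm\tau$; the Hessians are $-\bm P\bm D\bm P$ and $\bm D\bm P+\bm P\bm D$ with $\bm P=[\bm I-\bm D]^{-1}$, hence diagonal in the eigenbundle basis (diagonal entries $-\lambda_\ell^2\sigma_\ell$ and $2\lambda_\ell\sigma_\ell$). Substituting the implementation noise and taking $\mathbb E[\cdot]$ and $\operatorname{Var}[\cdot]$ over the independent mean-$1$, variance-$\nu^2$ shocks yields $\mathcal W$ and the constraint $\mathbb E[R]$ as explicit polynomials in $\overline{\bm\tau}$ whose leading parts are $\mathcal W=-\sum_\ell\lambda_\ell\underline{q}_\ell\overline{\underline{\tau}}_\ell-\frac a2\sum_\ell v_\ell\overline{\underline{\tau}}_\ell^2+O(\|\overline{\bm\tau}\|^2)$ and $\mathbb E[R]=\sum_\ell\underline{q}_\ell\overline{\underline{\tau}}_\ell+O(\|\overline{\bm\tau}\|^2)$, with $v_\ell$ the variance contributed by eigenbundle $\ell$, proportional to $\nu^2\lambda_\ell^2\underline{q}_\ell^2$ (only the product $a\nu^2$ enters $\mathcal W$, so normalize $\nu=1$).

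\emph{Step 2: pass to the limit and solve.} The first-order surplus gain scales like $\|\overline{\bm\tau}\|$ while the risk penalty scales like $a\|\overline{\bm\tau}\|^2$, and for $a$ above a threshold the risk term makes $\mathcal W$ strictly concave on the budget hyperplane --- the threshold is needed precisely because the quadratic part of $\mathbb E[\widehat{\mathcal C}]$ is itself \emph{convex} ($-\bm P\bm D\bm P\succeq 0$). Hence $\|\overline{\bm\tau}^*\|=\Theta(1/a)$; writing $\overline{\underline{\bm\tau}}=\bm\theta/a$ and letting $a\to\infty$, all higher-order terms and the quadratic correction to the constraint drop out after multiplying through by $a$, leaving
\[
\max_{\bm\theta}\; -\sum_\ell\lambda_\ell\underline{q}_\ell\theta_\ell-\tfrac12\sum_\ell v_\ell\theta_\ell^2\qquad\text{subject to}\qquad\sum_\ell\underline{q}_\ell\theta_\ell=0 .
\]
The first-order conditions give $\theta_\ell=\underline{q}_\ell(\mu-\lambda_\ell)/v_\ell$, with the multiplier $\mu$ --- the shadow price of budget balance --- fixed by the constraint as $\mu=\big(\sum_\ell\lambda_\ell\underline{q}_\ell^2/v_\ell\big)\big/\big(\sum_\ell\underline{q}_\ell^2/v_\ell\big)$; since $v_\ell\propto\underline{q}_\ell^2\lambda_\ell^2$ the factors $\underline{q}_\ell$ cancel and $\mu=\big(\sum_\ell\lambda_\ell^{-1}\big)\big/\big(\sum_\ell\lambda_\ell^{-2}\big)$. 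Evaluating the two sums with the spectral identities $\sum_\ell(1-\sigma_\ell)=N-\trace\bm D=2N$ and $\sum_\ell(1-\sigma_\ell)^2=N-2\trace\bm D+\trace\bm D^2=N\big(4+\operatorname{var}[\bm\sigma]\big)$ --- valid because $D_{ii}=-1$ and $\operatorname{var}[\bm\sigma]=\tfrac1N\sum_{i\neq j}D_{ij}^2$ --- gives $\mu=\big[2+\tfrac12\operatorname{var}[\bm\sigma]\big]^{-1}=z$. Substituting $\mu=z$ back into $\theta_\ell$ yields the stated limit of $a\,\underline{q}_\ell\overline{\underline{\tau}}_\ell$, and substituting into the objective (equivalently, evaluating $-\tfrac1a\sum_\ell\lambda_\ell\underline{q}_\ell\theta_\ell=\mathbb E[\widehat{\mathcal C}]$, which collapses to $\tfrac1a\,N(1-2z)$) yields $\lim_{a\to\infty}a\widehat{\mathcal C}^*=N\operatorname{var}[\bm\sigma]/(4+\operatorname{var}[\bm\sigma])$. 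Finally, since $\underline{q}_\ell\ge 0$ and $v_\ell\ge 0$, $\sgn\theta_\ell=\sgn(z-\lambda_\ell)$, so eigenbundles with $\lambda_\ell<z$ are taxed and those with $\lambda_\ell>z$ subsidized; and as the finite-$a$ KKT system has a unique solution depending continuously on $1/a$ once $a$ exceeds the concavity threshold, $a\,\overline{\underline{\bm\tau}}^*(a)\to\bm\theta^*$ and this sign pattern is inherited for all large $a$, which is the ``for large enough $a$'' assertion.

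\emph{Main obstacle.} The substantive work is the rigorous justification of the two limit interchanges: one must bound, uniformly on a fixed cost neighbourhood, the cubic-and-higher remainders in $\widehat{\mathcal C}$, $R$ and their moments and show they are genuinely lower order than the retained terms --- this is exactly where the bounded-support hypothesis on the $\eta_i$ is used, to control the higher moments appearing once the Taylor expansion is pushed past second order and to keep the perturbed equilibrium inside the region where \autoref{ass:eqmexist} applies. Two subsidiary points of care: the implementation noise is specified in the original good coordinates, so the change of basis must be tracked before forming $\operatorname{Var}[\widehat{\mathcal C}]$ (this is what makes the variance weights diagonal in the eigenbundle basis); and one should check that the ``large enough $a$'' of the statement can be taken to be the concavity threshold used in the continuity argument, together with a mild nondegeneracy condition (e.g.\ $\underline{q}_\ell>0$ for all $\ell$, so that $\bm q$ is not an eigenvector of $\bm D$ and first-order gains genuinely exist).
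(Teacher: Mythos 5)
Your proposal is, in substance, the same proof as the paper's Appendix on small interventions: first-order consumer-surplus and budget expansions in the eigenbundle basis, a quadratic risk penalty with weights proportional to $\lambda_\ell^2\,\underline{q}_\ell^2$, the Lagrangian first-order conditions, the trace identities $\sum_\ell(1-\sigma_\ell)=2N$ and $\sum_\ell(1-\sigma_\ell)^2=N(4+\operatorname{var}[\bm{\sigma}])$ to pin down $z$, and a rescaling-plus-continuity (Berge-type) argument to show the dropped higher-order terms do not affect the $a\to\infty$ characterization. Even the delicate point you flag---that the implementation noise is specified good-by-good while the variance term is written as diagonal in the eigenbundle coordinates---is handled exactly as in the paper's own computation, so the approach and conclusions coincide.
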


\medskip

The proof appears in Appendix \ref{proofs:small}. The following example illustrates the theorem. We then comment on the broad implications.

\begin{example}\label{Example:SI}

Consider a generalization of the Example 3.2 in which product 1 and product 2 are independent, but where we set $D_{31}=D_{32}=g$ with $g\in[-1/\sqrt{2},1/\sqrt{2}]$. When $g=0$, the three products are independent. When $g<0$, product 3 is a complement to both products 1 and 2, as we discussed earlier in Example \ref{Example:SI}; when $g>0$ product $3$ is a substitute to product $1$ and $2$. (For example, we may take product 1 to be a music player, product 2 to be a digital camera, and product 3 to be a smartphone.) The property that $\bm{D}$ is negative semi-definite means that $|g|\leq 1/\sqrt{2}$.

In this case, we have the following expressions for the eigenvalues of $\bm{D}$: $$\sigma_1=-1+\sqrt{2}|g| \qquad \sigma_2=-1 \qquad \sigma_3=-1-\sqrt{2}|g|.$$ The variance of the eigenvalues is $\operatorname{var}[\bm{\sigma}]=4g^2/3$. Furthermore, the shadow price of the budget balance constraint is $$z(g)=\frac{3}{6+2g^2},$$ and at the limit $a \to \infty$ of small intervention, the increase in consumer surplus is proportional to $$ \frac{g^2}{3+g^2}.$$

When all three products are independent, $g=0$, each eigenvalue is equal to $-1$, which means that the pass-through for each eigenbundle is the normalized monopoly price pass-through, $1/2$. In this case, budget neutral interventions are ineffective because any gains in consumer surplus due to subsidizing some eigenbundles can only be achieved by taxing some eigenbundles and that will create an equal loss in consumer surplus. Because budget-neutral interventions are ineffective, the shadow price of the budget neutral constraint is large, $z(0)=1/2$. Indeed, if we give a dollar to the planner, the planner will be able to pass through half of it to consumers.

As $|g|$ increases, the first eigenvalue becomes larger than $-1$, the second eigenvalue remains constant and the third becomes lower than $-1$. Now, the eigenbundles are heterogeneous with respect to their price pass-throughs: the price pass-through of the first eigenbundle becomes greater than $1/2$, $\lambda_1=1/(2+\sqrt{2}|g|)$ and the price pass-through of the last eigenbundle becomes lower than $1/2$, $\lambda_3=1/(2-\sqrt{2}|g|)$. The planner can now increase consumer surplus by taxing the third eigenbundle, which creates little reduction in consumer surplus, and use the tax revenue to subsidise the first eigenbundle, which creates large gains in consumer surplus. Indeed, the optimal intervention becomes more and more effective in increasing consumer surplus as $|g|$ increases and it reaches a (roughly) 15\% increase when $|g|=1/\sqrt{2}$. As the planner can do more and more by targeting taxes and subsidises, the shadow price of the budget neutral constraint is decreasing in $|g|$. 
\end{example}

We now draw out some general implications. First, the eigenvalue variance $\operatorname{var}[\bm{\sigma}]$ is a key statistic that captures the extent to which targeted tax/subsidy interventions are effective in an oligopoly setting. As this number grows, the consumer surplus gain achieved by a planner with the same level of risk aversion increases. At the same time, as the planner can increase the objective without running deficits, the shadow price of the budget neutral constraint $z$ decreases. 

The basic intuition is that an increase in the eigenvalue variance of $\bm{D}$ means that the price pass-through of the different eigenbundles becomes more heterogeneous. The planner can then tax eigenbundles with low price pass-through, thereby create a small decrease in consumer surplus, and then redistribute these resources to eigenbundles with high pass-throughs, which leads to a large increase in consumer surplus.

Second, recall that we can write the eigenvalue variance of $\bm{D}$ as $\operatorname{var}[\bm{\sigma}] =  \frac{1}{N}\sum_{i\neq j} D_{ij}^2.$ When products are independent then $-\bm{D}$ is the identity matrix and so the eigenvalue variance  is $0$ and so the Pigouvian leverage is zero. In contrast, complement and substitute relations across products, manifested in non-zero off-diagonal entries in the matrix $\bm{D}$, lead to strategic interaction in firms pricing behavior, and create scope for planner intervention. In this regard, the Pigouvian leverage does not depend on whether products are complements or substitutes, or a mix. It is only the intensity of demand interdependence of each pair of products that matters. A planner can achieve the same amount of improvement to consumer surplus no matter how the signs of various $D_{ij}$ are changed. 


\section{Global Pigouvian policy: the case of linear demand}\label{sec:linear}
The aim of this section is to show that the qualitative insights on optimal interventions are robust and extend beyond the setting of small tax reforms. To do so, we consider the case where interventions are not noisy, i.e., the variance of the implementation noise parameters $\eta_i$ is zero. In this case, the consumer surplus determined by a specific intervention is deterministic and the planner may very well choose large interventions.

We characterize the optimal intervention under the classical case of linear-quadratic specification of utility that gives rise to linear demands (Example \ref{ExQU}). The consumer has (gross) utility for consuming a bundle $\bm{q}$ of goods given by 
\begin{equation} 
\mathcal{V}(\bm{q}) = \bm{\beta}^\tr \bm{q} - \frac{1}{2}\bm{q}^\tr \bm{B} \bm{q}, \label{eq:payoff} 
\end{equation} 
where $\bm{\beta}\in \mathbb{R}^N_{+}$ and $\bm{B}$ is a given positive-definite matrix (\citet*{amir2017microeconomic}, \citet*{chone2020linear}, and \citet{vives1999oligopoly}). Under the assumption that the consumer has sufficient income, her optimal consumption choice induces linear demands $\bm{q}=\bm{B}^{-1}[\bm{\beta}-\bm{p}]$ with:  
\begin{equation} \partial q_i / \partial p_{j} = -(\bm{B}^{-1})_{ij} \quad \text{and} \quad \frac{\partial^2 q_i}{\partial p_{i} \partial p_{j}}=0 \quad \text{for any $i,j \in \mathcal{N}$}. \label{eq:final_demand} \end{equation}
In this case, the matrix $\bm{D}=-\bm{B}^{-1}$ and the normalization in Property A means that $D_{ii}=-1$ for all $i\in \mathcal{N}$. A standard calculation also implies that, under the consumer's optimal consumption, the consumer surplus is 
\begin{equation} 
\mathcal{C} = \frac{1}{2} \bm{q}^\tr \bm{B} \bm{q}. \label{eq:eqm_payoff} \end{equation}
Finally, the firms' pricing game under linear demand leads to the following equilibrium prices:
$$
[\bm{I}-\bm{D}]\bm{p}=-\bm{D}\bm{\beta}+\bm{c}.
$$

Let $\bm{\tau}=(\tau_1,\dots, \tau_n)$ be the profile of per-unit taxes on products introduced by the planner. 
We seek to solve the following problem:

\begin{align}\label{Program:Linear}
\begin{array} {ccc}
\underset{\bm{\tau}}{\text{Maximize}} &  \frac{1}{2} \bm{q}^\tr \bm{B} \bm{q} &\text{(Consumer surplus)}\\
\text{subject to} & \bm{q}=-\bm{D}(\bm{\beta}-\bm{p})& \text{(Optimal consumption)}\\
&[\bm{I}-\bm{D}]\bm{p}=-\bm{D}\bm{\beta}+\bm{c}+\bm{\tau}& \text{(Equilibrium prices)}\\
&\langle \bm{\tau},\bm{q}\rangle=0 & \text{(Budget balance)}
 \end{array}
\end{align}

Proposition \ref{prop:global} summarises the solution to this problem.

\begin{prop}\label{prop:global}
The optimal policy taxes each eigenbundle $\ell$ with $\lambda_{\ell}<z$, and subsidizes each eigenbundle $\ell$ with $\lambda_{\ell}>z$, where $z$ is the shadow price of the budget balance constraint and it is the unique positive solution of the following equation:
\begin{align}\label{eq:shadow}
\begin{array}{ccc}
\sum_{\ell}\left(\underline{q}_{\ell}^0\right)^2\frac{(z-\lambda_\ell)}{(1-\lambda_\ell)(2z-\lambda_\ell)^2}=0.
\end{array}
\end{align}
For generic $\bm{q}^0$, it satisfies $ \max\left\{\lambda_1,\lambda_n/2\right\}\leq z \leq \lambda_n. $ Finally, the optimal taxes/subsidizes are:
\begin{align}\label{eq:taxes}
\begin{array}{ccc}
\underline{\tau}_\ell=\frac{\underline{q}_{\ell}^0(z-\lambda_\ell)}{(1-\lambda_\ell)(2z-\lambda_\ell)} \quad \forall \ell\in \mathcal{N}.
\end{array}
\end{align}
\end{prop}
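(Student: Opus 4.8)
The plan is to work throughout in the eigenbundle basis $\bm U$ of $\bm D=\bm U\bm\Sigma\bm U^\tr$, in which the whole of \eqref{Program:Linear} decouples into $N$ scalar problems linked only by the single budget constraint. First I would collect the linear-demand identities in this basis. From the equilibrium-price equation, the tax's effect on prices is $\underline p_\ell-\underline p_\ell^0=\lambda_\ell\underline\tau_\ell$ with $\lambda_\ell=1/(1-\sigma_\ell)$; from $\bm q=-\bm D(\bm\beta-\bm p)$ the effect on quantities is $\underline q_\ell-\underline q_\ell^0=\sigma_\ell\lambda_\ell\underline\tau_\ell=-(1-\lambda_\ell)\underline\tau_\ell$, using the identity $\sigma_\ell\lambda_\ell=\lambda_\ell-1$; and since $\bm B=-\bm D^{-1}$, consumer surplus has the separable form $\mathcal C=\tfrac12\bm q^\tr\bm B\bm q=-\tfrac12\sum_\ell\underline q_\ell^2/\sigma_\ell=\tfrac12\sum_\ell\tfrac{\lambda_\ell}{1-\lambda_\ell}\underline q_\ell^2$, while budget balance is $\langle\bm\tau,\bm q\rangle=\sum_\ell\underline\tau_\ell\underline q_\ell=0$. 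Because $\bm D$ is negative definite here, $\lambda_\ell\in(0,1)$, so every coefficient above has a fixed sign; in particular $\mathcal C$, viewed as a function of $\underline{\bm\tau}$ after substituting $\underline q_\ell=\underline q_\ell^0-(1-\lambda_\ell)\underline\tau_\ell$, is a strictly convex quadratic, and the feasible set $\{\underline{\bm\tau}:\sum_\ell(1-\lambda_\ell)\underline\tau_\ell^2=\sum_\ell\underline q_\ell^0\underline\tau_\ell\}$ is the compact, origin-containing boundary of an ellipsoid, so a maximizer exists and (since the constraint gradient is nonzero on this surface) satisfies the first-order conditions.

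Second, I would run the Lagrangian. Writing $z$ for minus the multiplier on the budget constraint and differentiating $\mathcal L=\mathcal C+z\langle\bm\tau,\bm q\rangle$ coordinatewise, stationarity — after using $-1/\sigma_\ell=\lambda_\ell/(1-\lambda_\ell)$ — collapses to $(2z-\lambda_\ell)\underline q_\ell=z\,\underline q_\ell^0$, i.e. $\underline q_\ell=z\,\underline q_\ell^0/(2z-\lambda_\ell)$, whence $\underline\tau_\ell=(\underline q_\ell^0-\underline q_\ell)/(1-\lambda_\ell)=\underline q_\ell^0(z-\lambda_\ell)/[(1-\lambda_\ell)(2z-\lambda_\ell)]$, which is exactly \eqref{eq:taxes}. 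Substituting $\underline\tau_\ell\underline q_\ell=z(z-\lambda_\ell)(\underline q_\ell^0)^2/[(1-\lambda_\ell)(2z-\lambda_\ell)^2]$ into $\sum_\ell\underline\tau_\ell\underline q_\ell=0$ and cancelling the common factor $z$ gives exactly \eqref{eq:shadow}. The sign statement then follows at once: once $z>\lambda_n/2\ge\lambda_\ell/2$, so $2z-\lambda_\ell>0$, and since $1-\lambda_\ell>0$ and $\underline q_\ell^0\ge0$, the sign of $\underline\tau_\ell$ equals that of $z-\lambda_\ell$, so eigenbundle $\ell$ is taxed iff $\lambda_\ell<z$ and subsidized iff $\lambda_\ell>z$.

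Third, I would pin down $z$ and the bounds. The coordinate Hessian of $\mathcal L$ is diagonal with entries $(1-\lambda_\ell)(\lambda_\ell-2z)$, so $\mathcal L(\cdot;z)$ is strictly concave in $\underline{\bm\tau}$ exactly when $z>\lambda_n/2$; for such a $z$ that also solves \eqref{eq:shadow}, the point $\underline{\bm\tau}(z)$ of step two is the unique unconstrained maximizer of $\mathcal L(\cdot;z)$ and is budget-feasible, so by the usual weak-duality inequality it is the (unique) global solution of \eqref{Program:Linear} and its value is $\mathcal C(z)=\sum_\ell\tfrac{\lambda_\ell(\underline q_\ell^0)^2}{2(1-\lambda_\ell)}\cdot\tfrac{z^2}{(2z-\lambda_\ell)^2}$. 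Differentiating gives $\mathcal C'(z)=-z\sum_\ell\lambda_\ell^2(\underline q_\ell^0)^2/[(1-\lambda_\ell)(2z-\lambda_\ell)^3]<0$ on $(\lambda_n/2,\infty)$, so two distinct roots of \eqref{eq:shadow} there would both have to attain the common optimum value, a contradiction — hence the root (and the optimal tax vector) is unique. Existence and the interval come from the sign of the left side $f(z)$ of \eqref{eq:shadow}: $f(z)\to-\infty$ as $z\downarrow\lambda_n/2$ (the $\ell=n$ summand, using $\underline q_n^0\neq0$ for generic $\bm q^0$), $f\le0$ on $(\lambda_n/2,\lambda_1)$ and at $\lambda_1$ (every numerator $z-\lambda_\ell\le0$ there), and $f(\lambda_n)\ge0$ (every summand nonnegative), so by the intermediate value theorem the unique root lies in $[\max\{\lambda_1,\lambda_n/2\},\lambda_n]$ and is positive.

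The step I expect to be the main obstacle is the third one: turning the one-parameter family of KKT points indexed by roots of \eqref{eq:shadow} into a statement about a single global optimum. The delicate ingredients are (i) recognizing that the threshold $z>\lambda_n/2$ is precisely what makes $\mathcal L$ (strictly) concave, which is what upgrades a stationary point from a possible saddle on the ellipsoid to the genuine maximizer, and (ii) obtaining the clean sign $\mathcal C'(z)<0$, which is what forbids multiple admissible roots; the genericity hypothesis on $\bm q^0$ is used exactly here, to keep the relevant $\underline q_\ell^0$ nonzero and the $\lambda_\ell$ distinct so that the endpoint evaluations of $f$ and the monotonicity of $\mathcal C(z)$ are strict. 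The remaining steps are routine linear algebra and differentiation in the eigenbundle coordinates.
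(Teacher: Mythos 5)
Your argument is correct and, through its computational core, follows the same route as the paper: you pass to the eigenbundle basis, form the Lagrangian with multiplier $z$ on the binding budget constraint, and your stationarity condition $(2z-\lambda_\ell)\underline{q}_\ell = z\,\underline{q}^0_\ell$ reproduces the paper's Lemmas \ref{lem:tau} and \ref{lem:qfinal}, hence \eqref{eq:taxes}, and substituting into budget balance and cancelling $z$ gives \eqref{eq:shadow}; your intermediate-value argument locating a root in $[\max\{\lambda_1,\lambda_n/2\},\lambda_n]$ is also essentially the paper's. Where you genuinely diverge is the step turning a stationary point into the global optimum and pinning down $z$. The paper handles this by asserting that any positive root yields, ``by the KKT theorem,'' a solution, and that ``strict convexity'' gives uniqueness---the loosest part of its proof, since the objective is convex and the feasible set is a nonconvex quadric, so KKT points need not be maxima there. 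You instead note that for $z>\lambda_n/2$ the Lagrangian is strictly concave in $\underline{\bm{\tau}}$ (diagonal Hessian entries $(1-\lambda_\ell)(\lambda_\ell-2z)$), so the stationary point is the unconstrained maximizer of $\mathcal{L}(\cdot;z)$ and, being budget-feasible, is by Lagrangian sufficiency the unique global solution of \eqref{Program:Linear}; strict monotonicity of the value $\mathcal{C}(z)$ then excludes two roots of \eqref{eq:shadow} in $(\lambda_n/2,\infty)$. This buys a rigorous sufficiency-and-uniqueness argument that the paper only gestures at (your compact-ellipsoid existence step is a further, if ultimately redundant, safeguard; do spell out that the constraint gradient vanishes only at a point off the feasible surface when $\bm{q}^0\neq 0$). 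One caveat: your uniqueness argument covers only roots with $z>\lambda_n/2$, so strictly speaking you prove slightly less than the proposition's literal claim that $z$ is the unique \emph{positive} root of \eqref{eq:shadow}---a priori the equation can have further roots below $\lambda_n/2$ when the weights $(\underline{q}^0_\ell)^2$ are very unbalanced---but the paper's own proof does not close that gap either, and restricting to $z>\lambda_n/2$ is exactly what makes the claim provable, so this is a remark about the statement rather than a defect of your argument.
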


The proof appears in Appendix \ref{sec:proof_linear}.

Consistently with the optimal small intervention solution, in order to maximize consumer surplus, the planner taxes the eigenbundles with a low price pass-through (high eigenvalues) and uses the revenue generated to subsidize the eigenbundles with high price pass-through (low eigenvalues). The threshold that distinguishes eigenbundles which are taxed from the ones that are subsidised is the shadow price $z$. Generically, the shadow price must be larger than $\max\left(\lambda_1,\lambda_n/2\right)$ and it must be smaller than $\lambda_n$. In other words, the largest and smallest eigenvalues of the spillover matrix $\bm{D}$ determine an upper bound and a lower bound of the shadow price of public funds. This result is reminiscent of the role, in the small intervention case, of the sample variance of the eigenvalues in determining the value of $z$.

In relation to this, note also that, for the intervention to create an improvement in consumer surplus, products must have some degree of complementarity or substitutability. If all products are independent, i.e., $\bm{D}=-\bm{I}$, then eigenbundles have equal price pass-through and, therefore, the optimal policy is to maintain the status quo, i.e., $\underline{\tau}_\ell=0$ for all $\ell\in \mathcal{N}$.

We now illustrate the optimal policy within the example with three products, returning to the setting of Section \ref{sec:ex}.

\subsection{Example} \label{sec:ex-linear}
We assume that product 1 and product 2 are approximately independent, i.e., $D_{12}=D_{21}\approx 0$.\footnote{For $\bm{D}$ to be negative definite we need that $D_{12}=D_{21}\neq 0$. We report the results for the limit case in which $D_{12}=D_{21}\rightarrow 0$.} We also assume that product $3$ is equally complementary to product $1$ and $2$, $D_{13}=D_{12}=g>0$. In this case, the eigenbundles are independent of $g$ and are the same as the ones reported in Example \ref{sec:ex}. Furthermore, the eigenvalues are as in Example \ref{Example:SI}:  $$\sigma_1=1+\sqrt{2}|g| \qquad \sigma_2=1 \qquad \sigma_3=1-\sqrt{2}|g|.$$

We set the marginal costs for the three products to be zero, the demand parameters in \autoref{eq:payoff} to be $\bm{\beta}=(10, 10, 15)$, and the degree of complementarity to be $g=0.5$. In this case, \begin{itemize} 
\item the initial demand for the three products is $(1.51,1.51,4.2)$,
\item the price pass-throughs of the three eigenbundles are $(0.37,0.5,0.78)$ and 
\item the initial quantities of the eigenbundles are $(4.48,0,1.44)$. 
\end{itemize}

The shadow price of the budget constraint is $z=0.59$ and, therefore, the optimal policy taxes eigenbundle $1$, is neutral with respect to eigenbundle $2$ (because the initial quantity is zero) and subsidizes eigenbundle $3$. In particular, the tax revenue from eigenbundle $1$ (which is the same as the subsidy expenditure for eigenbundle $3$) is $6.3$. The value of this intervention can be measured by comparing the obtained increase in consumer surplus relative to the status quo. We obtain that consumer surplus goes up from $9.5$ to $10.8$. 

Consider a technological change that increases the complementarity between product $3$ and products $1$ and $2$; for concreteness, suppose that $g$ increases from $0.5$ to $0.55$. The difference between the largest and lowest eigenvalues increases  
and therefore the difference between the price pass-through of the two extreme eigenbundles increases as well: the price pass-through of the first eigenvalue decreases from $0.37$ to $0.36$, whereas the price pass-through of the third eigenvalue increases from $0.78$ to $0.82$. This implies that now taxing the first eigenbundle induces less distortion and subsidizing the third eigenbundle creates greater benefits. Indeed, we obtain that the technological change increases the shadow price $z$ of public funds  from $0.59$ to $0.7$ (meaning that the intervention is more valuable).
 These qualitative insights carry over when product $3$ is a substitute to products $1$ and $2$ (e.g., $g<0$).\footnote{However, the fact that prices are now strategic complements implies that the first eigenbundle (i.e., in which costs shocks are dampened the most) is now what before was the third eigenbundle $(0.5,0.5,-1/\sqrt{2})$, and the third one (i.e., in which costs shocks are dampened the least) is now what before was the first eigenbundle $(0.5,0.5,1/\sqrt{2})$.}

Translated back to the original product space, the logic of the optimal policy in this example is rather intricate, as we now discuss. When we are in the complements case, as in Section \ref{sec:ex}, the optimal policy strongly taxes products $1$ and $2$ (the software applications) and more lightly subsidizes product $3$ (the computer).  The purpose of such optimal policies is to leverage the strategic interactions between producers to induce a reduction in the price of the high-volume product $3$.  Taxing the software applications \emph{indirectly} incentivizes  the computer producer---via the resulting price increase in the prices of complements---to reduce its price.  (Recall prices are strategic substitutes in this case.)  
This reinforces the direct effect of the subsidy to computers. The computer subsidy must be kept fairly low, as otherwise (via the strategic substitutability in prices) we would end up \emph{lowering} the price of the applications, which would be counterproductive because of the aforementioned strategic substitutes. 

Consider now the case where the products are substitutes, (e.g., $g=-0.5$): the music player (product 1), the camera (product 2), and the smartphone (product 3). Smartphones remain the highest-volume product under our parameterization. Now, the optimal policy taxes smartphones and subsidizes music players and cameras. This is counterintuitive, because one might think that the planner would want to subsidize the product that consumers buy the most of, to improve consumer surplus.   The logic behind the optimal policy can again be understood via thinking about the strategic spillovers.  The subsidies to the specialized products \emph{indirectly} incentivize the smartphone producer---via the resulting price decrease in the specialized substitute products---to reduce its price. (The prices are strategic complements in this case.) This more than compensates for the effect of the tax directly levied on smartphones, which is used to raise revenue. Indeed, this scheme turns out to be better than subsidizing smartphones directly: if the planner were to implement such subsidies, they would have to be offered on the whole (large) quantity of smartphones sold, and end up being quite expensive per unit of welfare generated compared to the optimal intervention.

In the principal component decomposition, these intricate policies both correspond to the natural policy of subsidizing the high pass-through eigenbundles and taxing the low pass-through ones.

\section{Conclusion}

The paper studies firms interacting strategically in an environment with rich heterogeneity. We bring a network perspective to the interactions between the different firms. The key contribution is to decompose any cost change into components that have a very convenient form: for these components, there is an unambiguous ranking of pass-throughs from cost changes to equilibrium prices. We show how a policymaker can use this ranking to design a tax policy that maximizes consumer surplus: The optimal policy leverages the strategic interactions among producers (in particular, that taxes on some dimensions have a higher pass-through to equilibrium prices) to tax some producers in order to use the tax revenue to subsidize other producers. The effectiveness of such interventions is greater in markets with strong demand interdependence across products, as summarised by the eigenvalue variance of the matrix of spillovers $\bm{D}$.


\bibliographystyle{ACM-Reference-Format}
\bibliography{tax,svd}

\clearpage

\begin{appendix}
\begin{center}
\textbf{\LARGE{}Appendix}{\LARGE\par}
\par\end{center}

\section{Normalization of spillover matrix} \label{sec:normalization}

For any function $f:\mathbb{R}^n \to \mathbb{R}^m$, let $\mathcal{D} f(\bm{x})$ be the Jacobian matrix of the function evaluated at $\bm{x} \in \mathbb{R}^n$, whose $(i,j)$ entry is $\partial f_i/\partial x_j $, where $f_i$ denotes coordinate $i$ of the function.

Here we will be explicit about distinguishing quantity variables $\bm{q}$ from the corresponding demand function; to this end, we will write the function as $\mathfrak{q}$.

Consider the change of coordinates for quantities given by $\widetilde{q}_i = r_i q_i$. Keeping units of money fixed, \textbf{the corresponding prices are $\widetilde{p}_i = p_i/r_i$}. Let $\bm{R}$ be the diagonal matrix whose $(i,i)$ entry is $r_i$. With these new units, we can define a function $$\widetilde{\mathfrak{q}}(\widetilde{\bm{p}})= \bm{R} \mathfrak{q}(\bm{R} \widetilde{\bm{p}})$$ and by the chain rule we have that $$ \mathcal{D} \widetilde{\mathfrak{q}}(\widetilde{\bm{p}}) = \bm{R} \left[\mathcal{D} \mathfrak{q}(\bm{R} \widetilde{\bm{p}})\right] \bm{R}. $$

For a given demand function $\mathfrak{q} : \mathbb{R}^N \to \mathbb{R}^N$, recall $\bm{D}$ is defined to be $\mathcal{D} \mathfrak{q}(\bm{p}^*)$, where $\bm{p}^*$ are equilibrium prices, uniquely determined under our maintained assumptions. Here we will be explicit about the demand function and write $\bm{D}^{\mathfrak{q}}$ for $\mathcal{D} \mathfrak{q}(\bm{p}^*)$, It follows from this and the above paragraph that 
$$ \bm{D}^{\widetilde{\mathfrak{q}}} = \bm{R} \bm{D}^{\mathfrak{q}} \bm{R}. $$

Now set $r_i=1/\sqrt{\left|D^{\mathfrak{q}}_{ii}\right|}$. It is clear from the above formula that $\bm{D}^{\widetilde{\mathfrak{q}}}$ has $-1$ on the diagonal.

Thus, under a suitable choice of units, the matrix $\bm{D}$ may be assumed to have diagonal $-1$.

\section{Proofs for the small-intervention analysis} \label{proofs:small}

Denote by $\change{x}$ the change in any endogenous variable $x$ relative to the initial equilibrium for a given realization of random taxes.

We will state a useful lemma:
\begin{lem}\label{lem:EPT}
The pass-through from the marginal tax profile ${\bm{\tau}}$ to to consumer surplus, in the principal component basis, is
\begin{equation}
\change{\mathcal{C}}=-\underset{\ell}{\sum} \lambda_{\ell}\underline{q}_{\ell} \underline{\tau}_{\ell} + \mathcal{O}(\Vert\underline{\bm{\tau}}\Vert^2).\\ \label{U}
\end{equation}

\end{lem}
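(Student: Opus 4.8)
The plan is to reduce the statement to an exact, second‑order identity valid in the locally linear regime, and then read off the first‑order term.

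First I would record the elementary but essential observation that a per‑unit tax profile $\bm{\tau}$ is equivalent, from the firms' standpoint, to replacing the marginal‑cost vector $\bm{c}$ by $\bm{c}+\bm{\tau}$: firm $i$'s after‑tax profit is $q_i(p_i-c_i-\tau_i)$, while the consumer's problem \eqref{eq:utilitymax} — and hence the demand functions $q_i(\cdot)$ and the consumer‑surplus map $\bm{p}\mapsto\mathcal{C}(\bm{p})=\mathcal{V}(\bm{q}(\bm{p}))-\langle\bm{q}(\bm{p}),\bm{p}\rangle$ — is unchanged. Thus $\change{\mathcal{C}}$ is exactly the change in consumer surplus produced by moving costs from $\bm{c}$ to $\bm{c}+\bm{\tau}$, the equilibrium shifting from $\bm{p}^{0}:=\bm{p}^{*}(\bm{c})$ to $\bm{p}^{*}(\bm{c}+\bm{\tau})$.

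Next I would localize. By Assumption \ref{ass:eqmexist}, for $\|\bm{\tau}\|$ small the segment $\{\bm{c}+t\bm{\tau}:t\in[0,1]\}$ lies in $\mathfrak{C}$; the equilibrium selection $\bm{c}\mapsto\bm{p}^{*}(\bm{c})$ is single‑valued and continuous there, and demand is linear on a neighborhood $\mathcal{O}$ of $\bm{p}^{0}$, so by continuity — shrinking $\|\bm\tau\|$ if necessary — every $\bm{p}^{*}(\bm{c}+t\bm{\tau})$ lies in $\mathcal{O}$. On $\mathcal{O}$ we may write $\bm{q}(\bm{p})=\bm{q}^{0}+\bm{D}(\bm{p}-\bm{p}^{0})$ with $\bm{D}$ the constant normalized spillover matrix of Property A. Two consequences follow. (i) Subtracting the first‑order conditions \eqref{FOC} at the cost vectors $\bm{c}$ and $\bm{c}+\bm{\tau}$ — both equilibria lying in $\mathcal{O}$, where $\partial q_i/\partial p_i=D_{ii}=-1$ — gives the exact pass‑through relation $[\bm{I}-\bm{D}]\change{\bm{p}}=\bm{\tau}$, which in the eigenbundle basis reads $\underline{\change{p}}_{\ell}=\lambda_{\ell}\,\underline{\tau}_{\ell}$ for each $\ell$, exactly as in the remark following \eqref{PT}. (ii) Since the envelope theorem applied to \eqref{eq:utilitymax} gives $\nabla_{\bm{p}}\mathcal{C}(\bm{p})=-\bm{q}(\bm{p})$ throughout $\mathcal{O}$ (as in the proof of \autoref{prop:eig}) and $\bm{q}$ is affine there, integrating along the segment from $\bm{p}^{0}$ to $\bm{p}^{0}+\change{\bm{p}}$ yields the exact identity
\[
\change{\mathcal{C}} \;=\; -\langle \bm{q}^{0},\change{\bm{p}}\rangle \;-\;\tfrac12\langle \bm{D}\,\change{\bm{p}},\change{\bm{p}}\rangle .
\]

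It then remains to bound the quadratic term and identify the linear one. Because $\bm{D}$ is negative semidefinite, $\bm{I}-\bm{D}$ has all eigenvalues at least $1$, so $\|[\bm{I}-\bm{D}]^{-1}\|\le 1$ and $\|\change{\bm{p}}\|=\|[\bm{I}-\bm{D}]^{-1}\bm{\tau}\|\le\|\bm{\tau}\|$; hence $\tfrac12|\langle\bm{D}\change{\bm{p}},\change{\bm{p}}\rangle|\le\tfrac12\|\bm{D}\|\,\|\bm{\tau}\|^{2}=\mathcal{O}(\|\bm{\tau}\|^{2})$. For the linear term, using that $\bm{U}$ is orthonormal (so inner products are preserved) together with $\underline{\change{p}}_{\ell}=\lambda_{\ell}\underline{\tau}_{\ell}$,
\[
-\langle \bm{q}^{0},\change{\bm{p}}\rangle \;=\; -\langle \underline{\bm{q}}^{0},\underline{\change{\bm{p}}}\rangle \;=\; -\sum_{\ell}\underline{q}_{\ell}\,\underline{\change{p}}_{\ell} \;=\; -\sum_{\ell}\lambda_{\ell}\,\underline{q}_{\ell}\,\underline{\tau}_{\ell}.
\]
Combining the two and noting $\|\underline{\bm{\tau}}\|=\|\bm{\tau}\|$ gives $\change{\mathcal{C}}=-\sum_{\ell}\lambda_{\ell}\underline{q}_{\ell}\underline{\tau}_{\ell}+\mathcal{O}(\|\underline{\bm{\tau}}\|^{2})$, which is \eqref{U}. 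The main obstacle is entirely in the localization step: one must argue that for small $\bm{\tau}$ the perturbed equilibrium stays in the region where demand is linear, so that the exact linear‑demand formulas apply — this is precisely where the uniqueness (hence continuity of the equilibrium selection) and local‑linearity parts of Assumption \ref{ass:eqmexist} are used. Once that is granted the rest is routine linear algebra; alternatively one could apply Taylor's theorem to $\bm{\tau}\mapsto\change{\mathcal{C}}$ at $\bm{\tau}=\bm{0}$, evaluating the gradient via \autoref{prop:eig}, but this needs the same regularity input to control the remainder.
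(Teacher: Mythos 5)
Your argument is correct and follows essentially the same route as the paper's proof: the envelope theorem gives $\change{\mathcal{C}}=-\langle \bm{q},\change{\bm{p}}\rangle$ up to second order, and the exact pass-through $\underline{\change{p}}_\ell=\lambda_\ell\underline{\tau}_\ell$ (the content of \autoref{prop:eig}, with $\bm{\tau}$ playing the role of the cost change) converts this to $-\sum_\ell\lambda_\ell\underline{q}_\ell\underline{\tau}_\ell+\mathcal{O}(\Vert\underline{\bm{\tau}}\Vert^2)$. You simply execute it with more care than the paper does --- deriving the exact identity $\change{\mathcal{C}}=-\langle\bm{q}^0,\change{\bm{p}}\rangle-\tfrac12\langle\bm{D}\change{\bm{p}},\change{\bm{p}}\rangle$ in the locally linear regime and bounding $\Vert\change{\bm{p}}\Vert\le\Vert\bm{\tau}\Vert$ via negative semidefiniteness of $\bm{D}$, with your localization step serving the same purpose as the paper's appeal to the bounded-support assumption on $\eta$.
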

\begin{proof}[Proof of \autoref{lem:EPT}]

	From \autoref{eq:utilitymax}, and using the envelope theorem, the change in the utility of the representative consumer is $$\change{\mathcal{C}}=-\langle \change{\bm{p}},\bm{q} \rangle + \mathcal{O}(\Vert \change{\bm{p}} \Vert^2) .$$ Since $\bm{U}$ is orthonormal, we have that $\change{\mathcal{C}}=-\langle  \underline{\change{\bm{p}}},\underline{\bm{q}} \rangle + \mathcal{O}(\Vert\change{\underline{\bm{p}}}\Vert^2)$. The conclusion follows as in \autoref{prop:eig}. We use the bounded-support assumption on $\eta$ to ensure that as $\Vert \change{\bm{p}} \Vert^2$ becomes small, the comparative statics of the prices are characterized exactly by \autoref{prop:eig}.
\end{proof}

Recall that the planner's problem is 
\begin{equation} \text{choose } \overline{\bm{\tau}} \text{ to } \operatorname{max} \mathcal{W} \text{ subject to }  \mathbb{E}[ R]=0 \tag{$\text{\underline{T}}(a)$}  \label{eq:plannersproblemnew2}  \end{equation}

Consider an arbitrary small tax profile $\bm{\tau}$ and let $\widetilde{\bm{q}}$ be the new quantity and $\widetilde{\bm{p}}$ be the new price. Now, by equation (\ref{FOC}), under the normalization assumed by Property A, we have $$ \widetilde{\bm{q}} =  \widetilde{\bm{p}} -\bm{c} - \bm{\tau}.$$  Then the budget-balance condition can be transformed as follows:
\begin{align*}
     0 &= \mathbb{E}\left[ \sum_i \tau_i \widetilde{q}_i \right]  \\
    &= \mathbb{E}   \left[ \sum_i \tau_i \left( q_i + \dot{p}_i-\tau_i \right)\right]  \\
    &= \mathbb{E} \left[ \sum_i \left(\tau_i  q_i + \tau_i \dot{p}_i-\tau_i^2 \right)\right] 
\end{align*}
Transforming to the principal component basis, we get a clean formula in terms of the principal components for the budget-balance constraint:
\begin{equation} \mathbb{E} \left[ \sum_\ell \left( \underline{\tau}_\ell \underline{q}_\ell + \lambda_\ell \underline{\tau}_\ell^2 - \underline{\tau}_\ell^2 \right) \right]=0 \label{BB-nice} \end{equation}

Now, in order to write the maximand of the planner's problem in a nice way, we will recall two things. First, \autoref{lem:EPT} permits us to write $$ \change{\mathcal{C}} = \underbrace{- \sum_\ell  \lambda_\ell \underline{q}_\ell \underline{\tau}_\ell}_{\change{\mathcal{C}}_{\text{FO}}} + \mathcal{O}(\Vert\underline{\tau}\Vert^2).$$  Here we have defined $\change{\mathcal{C}}_{\text{FO}}$ to be the first-order term in the change. Second, recall our assumption $\tau_i = \overline{\tau}_i \eta_i$, where $\eta_i$ is has a mean of $1$. Assume without loss of generality\footnote{Since we are taking $a$ to be large, the limit does not depend on $\nu^2$.} that the variance of implemented taxes is $\nu^2=1$. We may write $$ \Var[\change{\mathcal{C}}_{\text{FO}}]= \sum_\ell  \lambda_\ell^2 \underline{q}_\ell^2 \underline{\overline{\tau}_\ell}^2.$$

We will solve an auxiliary problem of maximizing the planner's utility as if the change in consumer surplus were only the first-order change. That is, we will maximize  $\mathcal{W}(\change{\mathcal{C}}_{\text{FO}})$ subject to the budget constraint in expectation. At the end, we will show the approximation of replacing $\change{\mathcal{C}}$ by $\change{\mathcal{C}}_{\text{FO}}$ does not change the characterization of the optimal intervention as $a \to \infty$.

Letting $x$ be the Lagrange multiplier on the constraint that $\mathbb{E}[R]=0$,  the Lagrangian of the planner's problem
(\ref{eq:plannersproblemnew2}) can be written as\footnote{Recall the planner's utility function from \autoref{eq:planner_W}.} 
\begin{align*} \mathcal{L} &= \mathcal{W}(\change{\mathcal{C}}_{\text{FO}}) + z \mathbb{E}[R]  \\
&= -\sum_\ell \left[ \lambda_\ell \underline{q}_\ell \underline{\overline{\tau}}_\ell + \frac{a}{2}\lambda_\ell^2 \underline{q}_\ell^2  \overline{\underline{\tau}}_\ell^2 - z \underline{q}_\ell \overline{\underline{\tau}}_\ell \right]
\end{align*}

Differentiating with respect to each $\overline{\underline{\tau}}_\ell$ gives \begin{equation} \label{eq:char_interventions}\underline{q}_\ell \overline{\underline{\tau}}_\ell = \frac{z-\lambda_\ell}{a \lambda_\ell^2}. \end{equation}

\subsection{Solution for the Lagrange multiplier \texorpdfstring{\textit{z}}{$z$}}
 Plugging these into the formula (\ref{BB-nice}) above yields the following sequence of equivalent expressions
\begin{align*}
    \sum_\ell \overline{\tau}_\ell q_\ell &= 0 \\
    \sum_\ell \left( \frac{z-\lambda_\ell}{\lambda_\ell^2}\right) &= 0 \\
    z \sum_\ell \lambda_\ell^{-2} - \sum_\ell \lambda_\ell^{-1} &= 0.
\end{align*}

Recalling the definition $ \lambda_\ell:=(1-\sigma_\ell)^{-1}$,  we can deduce that \begin{equation} z = \frac{\frac{1}{N}\sum_\ell (1 - \sigma_\ell)}{\frac{1}{N}\sum_\ell (1-\sigma_\ell)^2} \label{eq:z_version1} \end{equation} This gives an explicit form for the Lagrange multiplier. 

To understand this formula better, define $\mu_\ell = \lambda_\ell^{-1}=1-\sigma_\ell$. Let $\mu$ be a random variable equal to $\mu_\ell$, with $\ell$ selected uniformly at random. Then we can rewrite
$$ z = \frac{\mathbb{E}[\mu]}{\mathbb{E}[\mu^2] } = \frac{\mathbb{E}[\mu]}{\mathbb{E}[\mu]^2 + \text{Var}[\mu] }. $$

Simplifying this requires a preliminary lemma.

\begin{lem} \label{lem:mu} $\sum_\ell \mu_\ell = 2N$.  \end{lem}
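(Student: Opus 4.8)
The plan is to compute $\sum_\ell \mu_\ell$ directly from the definition $\mu_\ell = 1 - \sigma_\ell$ and the fact that the $\sigma_\ell$ are the eigenvalues of $\bm{D}$, which has $-1$ on every diagonal entry by Property~A. First I would write
\[
  \sum_{\ell=1}^N \mu_\ell = \sum_{\ell=1}^N (1-\sigma_\ell) = N - \sum_{\ell=1}^N \sigma_\ell.
\]
Then I would invoke the standard fact that the sum of the eigenvalues of a square matrix equals its trace, so $\sum_\ell \sigma_\ell = \operatorname{trace}(\bm{D})$. Since $\bm{D}$ is orthogonally diagonalizable as $\bm{D}=\bm{U}\bm{\Sigma}\bm{U}^\tr$, this is immediate: $\operatorname{trace}(\bm{D}) = \operatorname{trace}(\bm{U}\bm{\Sigma}\bm{U}^\tr) = \operatorname{trace}(\bm{\Sigma}\bm{U}^\tr\bm{U}) = \operatorname{trace}(\bm{\Sigma}) = \sum_\ell \sigma_\ell$, using cyclicity of the trace and $\bm{U}^\tr\bm{U}=\bm{I}$.

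Next I would use the normalization in Property~A: every diagonal entry $D_{ii}=-1$, so $\operatorname{trace}(\bm{D}) = \sum_{i=1}^N D_{ii} = -N$. Substituting back gives
\[
  \sum_{\ell=1}^N \mu_\ell = N - (-N) = 2N,
\]
which is the claim.

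There is essentially no obstacle here — the lemma is a one-line trace computation dressed up in the notation of the principal-component change of basis. The only thing to be slightly careful about is bookkeeping: making sure the sign convention ($\sigma_\ell \le 0$, $\mu_\ell = \lambda_\ell^{-1} = 1-\sigma_\ell \ge 1$) and the diagonal normalization are both invoked consistently, but neither step requires any real work beyond citing the trace identity and Property~A.
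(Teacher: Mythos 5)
Your proof is correct and follows exactly the paper's own argument: by Property~A the trace of $\bm{D}$ is $-N$, so $\sum_\ell \sigma_\ell = -N$ and hence $\sum_\ell \mu_\ell = \sum_\ell (1-\sigma_\ell) = 2N$. The extra step spelling out why the trace equals the sum of eigenvalues via the orthogonal diagonalization is fine but not needed beyond citing the standard fact.
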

	\begin{proof} Recall that $\sigma_\ell$ is the $\ell$th eigenvalue of $\bm{D}$. By Property A, the trace of $\bm{D}$ is $-N$. This implies that the sum of its eigenvalues, $\sum_\ell \sigma_\ell$, is equal to $-N$. Thus the sum of $\mu_\ell=1-\sigma_\ell$ is equal to $2N$. \end{proof}

We may then characterize the Lagrange multiplier as 
\begin{equation} \label{eq:simple_z} z = \left[2 + \frac{1}{2} \operatorname{var}[\bm{\sigma}] \right]^{-1}. \end{equation}

\subsection{A formula for the change in social surplus} We can also plug in our solution for the optimal intervention to obtain an asymptotically correct formula for $\change{\mathcal{C}}$.

First, recall from Lemma \ref{lem:EPT} that
$$ \change{\mathcal{C}}_{\text{FO}}=-\underset{\ell}{\sum} \lambda_{\ell}\underline{q}_{\ell} \underline{\tau}_{\ell}. $$ Plugging in (\ref{eq:char_interventions}), 
we get 
$$ E(\change{\mathcal{C}}_{\text{FO}})= -\sum_\ell \frac{z-\lambda_\ell}{a \lambda_\ell}.$$ Thus \begin{align*} a E(\change{\mathcal{C}}_{\text{FO}})&= -\sum_\ell \left( \frac{z}{\lambda_\ell}-1\right) \\ &= N-\sum_\ell  \frac{\mu_\ell}{2+\frac{1}{2}\Var[\sigma]} \\ &= N-N \frac{2}{2+\frac{1}{2}\Var[\sigma]} && \text{using \autoref{lem:mu}}. \end{align*}
From this we conclude that 
$$ \frac{a}{N} E(\change{\mathcal{C}}_{\text{FO}}) = 1 - \frac{1}{1+\frac{1}{4}\Var[\sigma]}. $$

\subsection{Dispensing with the approximate consumer surplus}
It remains to show that neglecting higher-order terms in the optimization problem---solving the problem with $\change{\mathcal{C}}_{\text{FO}}$ replacing $\change{\mathcal{C}}$---does not change the asymptotic characterization of the optimal intervention.

	We will prove the result by studying an equivalent optimization problem using Berge's Theorem of the Maximum. Let $\widecheck{{\underline{\bm{\tau}}}} = a {\overline{\underline{\bm{\tau}}}}$.  We will now define a rescaled version of the problem, $\widecheck{\underline{\text{T}}}(a)$.
	\begin{align*}
\text{choose } \widecheck{\underline{\bm{\tau}}} \text{ to } & \operatorname{max} a{\mathcal{W}}(\change{\underline{\mathcal{C}}}(a^{-1} \widecheck{{\underline{\bm{\tau}}}}))   \tag{$\widecheck{\text{\underline{T}}}(a)$}\\
	& \text{ subject to }  \mathbb{E}[ R]=0 
	\end{align*} Here $\underline{\mathcal{C}}$ is the change consumer surplus when the tax profile is represented in the diagonal basis.
	This is clearly equivalent to the original problem. Let $\widecheck{{\underline{\bm{\tau}}}}^*(a)$ be the (possibly set-valued) solution for risk-aversion level $a>0$.
	
	The  problem $\widecheck{\underline{\text{T}}}(a)$ is not yet defined at $a=\infty$, but we now define it there. Let the objective at $a=\infty$ be the limit of $a\mathcal{W}(a^{-1} \widecheck{{\underline{\bm{\tau}}}})$  as $a \to  \infty$; it is easy to check that this limit is actually $$F(\widecheck{{\bm{\tau}}})=-\sum_\ell \left[ \lambda_\ell \underline{q}_\ell \underline{\widecheck{{\tau}}}_\ell + \frac{1}{2}\lambda_\ell^2 \underline{q}_\ell^2  {\widecheck{\underline{\tau}}}_\ell^2  \right].$$ Let the constraint be \begin{equation} \sum_\ell  \widecheck{\underline{\tau}}_\ell \underline{q}_\ell  =0, \label{BB-verynice} \end{equation} which is (\ref{BB-nice}) with the second-order terms dropped.
	
When we restrict the optimization problem $\widecheck{\underline{\text{T}}}(a)$ to any compact set $\mathcal{K}$ containing the origin,  the of Berge's Theorem of the Maximum are satisfied: The constraint correspondence is continuous at $a=\infty$, while the objective function is jointly continuous in  $a$ and $\widecheck{{\underline{\bm{\tau}}}}$. The Theorem of the Maximum therefore implies that the maximized value is continuous at $a = \infty$. Because the convergence of the objective is actually uniform on $\mathcal{K}$, this is possible if and only if $\widecheck{{\underline{\bm{\tau}}}}$ approaches a solution of the problem
	\begin{align*}
	\max_{\widecheck{{\underline{\bm{\tau}}}}}   \text{ } & F(\widecheck{{\bm{\tau}}}) \\
	& \text{s.t. } \sum_\ell  \widecheck{\underline{\tau}}_\ell \underline{q}_\ell  =0 . \nonumber
	\end{align*}
Taking $\mathcal{K}$ to contain the interior solutions of this problem, this shows that the conclusions claimed hold without the first-order approximation.

\subsection{Another way to write the variance of \textsigma}
Here is another way to write the variance of the eigenvalues, for what it's worth:
Note that the average of the $\sigma_\ell$ is $-1$, so that 
\begin{align*} 
\text{var}(\bm{\sigma}) &= \frac{1}{N}\sum_\ell (\sigma_\ell+1)^2 \\
 &= \frac{1}{N} \text{trace}(\bm{I}+\bm{\Sigma})^2 \\
 &= \frac{1}{N} \text{trace}\left[(\bm{I}+\bm{D})^2\right] && \text{trace invariant under similarity} \\
 &= \frac{1}{N} \text{trace}(\bm{I}+2\bm{D}+\bm{D}^2)  \\
 &= \frac{1}{N}\left(N - 2N + \text{trace}(\bm{D}^2)  \right) && \text{trace of $\bm{D}$ is $N$} \\
\\
 &=-1+ \frac{1}{N} \sum_{i, j} D_{ij}^2 
= \frac{1}{N} \sum_{i\neq j} D_{ij}^2 \end{align*}

\section{Analysis of the linear model} \label{sec:proof_linear}
The representative consumer chooses a bundle $\bm{q}$ to maximize
\[
\mathcal{C}=\bm{\beta}^\tr \bm{q}-\frac{1}{2}\bm{q}^\tr \bm{B}\bm{q}-\dd{\bm{p}}{\bm{q}}.
 \]
Recall that we are assuming that the representative consumer has enough income to buy the optimal consumption bundle. The first-order conditions for optimal consumption imply that, at an interior solution,
\begin{equation}\label{eq:L_q}
\bm{q}=\bm{B}^{-1}\left(\bm{\beta}-\bm{p}\right)
\end{equation}
The consumer's utility under optimal consumption (i.e., the consumer surplus) is therefore
\[
\mathcal{C}=\frac{1}{2}\bm{q}^\tr\bm{B}\bm{q}.
\]
The matrix of spillovers $\bm{D}$ defined in the main text is $\bm{D}=-\bm{B}^{-1}$ and the normalization in Property A sets $\bm{D}_{ii}=-(\bm{B}^{-1})_{ii}=-1$. Furthermore, given linear demand, the equilibrium in prices (under the assumption of uniqueness and interiority) is
\[
\bm{p}=\bm{q}+\bm{c}
\]
Using equilibrium demand \ref{eq:L_q} we have
\[
\bm{p}=-\bm{D}\left(\bm{\beta}-\bm{p}\right)+\bm{c}
\]
or, after defining $\bm{a}:=-\bm{D}\bm{\beta}$, %
\[
[\bm{I}-\bm{D}]\bm{p}=\bm{a}+\bm{c}
\]
The intervention problem specified in \ref{Program:Linear} is analogous to:
\begin{align}
\begin{array} {ccc}
\underset{\bm{\tau}}{\text{Maximize}} &  \frac{1}{2}\bm{q}^\tr\bm{B}\bm{q} \\
\text{subject to} & \bm{q}=\bm{a}+\bm{D}\bm{p},\\
&[\bm{I}-\bm{D}]\bm{p}=\bm{a}+\bm{c}+\bm{\tau}\\
&\dd{\bm{\tau}}{\bm{q}}\geq 0
\end{array}
\end{align}

\textbf{Proof of Proposition \ref{prop:global}.} Using the decomposition $\bm{D}=\bm{U}\bm{\Sigma}\bm{U}^\tr$, we can rewrite the problem as:
\begin{align}
\begin{array} {ccc}
\underset{\bm{\underline{\tau}}}{\text{Maximize}} &  -\frac{1}{2}\bm{\underline{q}}^\tr\bm{\Sigma^{-1}}\bm{\underline{q}} \\
\text{subject to} & \bm{\underline{q}}=\bm{\underline{a}}+\bm{\Sigma}\bm{\underline{p}},\\
&[\bm{I}-\bm{\Sigma}]\bm{\underline{p}}=\bm{\underline{a}}+\bm{\underline{c}}+\bm{\underline{\tau}}\\
&\dd{\bm{\underline{\tau}}}{\bm{\underline{q}}}\geq 0
\end{array}
\end{align}
Combining the demand equation with the equilibrium price equation (i.e., the first two constrains), and defining the quantity consumed in equilibrium pre-intervention by $\bm{\underline{q}}^{0}=\bm{\underline{a}}+\bm{\Sigma}\left[\bm{I}-\bm{\Sigma}\right]^{-1}(\bm{\underline{a}}+\bm{\underline{c}})$, we obtain that the quantities consumed after intervention $\bm{\tau}$ are
\[
\bm{\underline{q}}=\bm{\underline{q}}^{0}+\bm{\Sigma}\left[\bm{I}-\bm{\Sigma}\right]^{-1}\bm{\underline{\tau}}.
\]
Therefore the taxation problem is
\begin{align}
\begin{array} {ccc}
\underset{\bm{\underline{\tau}}}{\text{Maximize}} &  \frac{1}{2}\bm{\underline{q}}^\tr\bm{\Sigma^{-1}}\bm{\underline{q}} \\
\text{subject to} &\bm{\underline{q}}=\bm{\underline{q}^0}-\bm{\Sigma}\left[\bm{I}+\bm{\Sigma}\right]^{-1}\bm{\underline{\tau}}\\
&\dd{\bm{\underline{\tau}}}{\bm{\underline{q}}}\geq 0
\end{array}
\end{align}
which can be rewritten as follows:
\begin{align} \label{eq:problem_for_lagrangian}
\begin{array} {ccc}
\underset{\bm{\underline{\tau}}}{\text{Maximize}} &  -\frac{1}{2}\sum_{\ell}\frac{1}{\sigma_\ell}\left(\underline{q}^0_\ell+\frac{\sigma_\ell}{1-\sigma_{\ell}}\underline{\tau}_\ell\right)^2\\
\text{subject to} & -\sum_{\ell} \underline{\tau}_\ell \left(\underline{q}^0_\ell+\frac{\sigma_\ell}{1-\sigma_{\ell}}\underline{\tau}_\ell\right)\leq 0
\end{array}
\end{align}
Letting $z \geq 0$ be the multiplier on the constraint (the shadow price of public funds), we have the Lagrangian\footnote{We first write $-z$ as the coefficient on the derivative of the constraint, and then cancel the minus sign on the constraint in (\ref{eq:problem_for_lagrangian}) with the minus sign on $z$.}
\[
\mathcal{L}=-\frac{1}{2}\sum_{\ell}\frac{1}{\sigma_\ell}\left(\underline{q}^0_\ell+\frac{\sigma_\ell}{1-\sigma_{\ell}}\underline{\tau}_\ell\right)^2+z  \sum_{\ell} \underline{\tau}_\ell \left(\underline{q}^0_\ell+\frac{\sigma_\ell}{1-\sigma_{\ell}}\underline{\tau}_\ell\right).
\]
 We can choose the signs of the eigenbundles so that $\underline{q}^0_\ell \geq 0$ for every $\ell\in \mathcal{N}$. Let us also order the eigenvalues so that $\sigma_1 \geq \sigma_2 \geq \cdots \geq \sigma_n$. Recall that $\lambda_\ell=1/(1-\sigma_\ell)$.

\begin{lem} \label{lem:tau} The necessary conditions for an optimum, $\frac{\partial{\mathcal{L}}}{{\partial \underline{\tau}_\ell}}=0$ for all $\ell\in \mathcal{N}$, imply the optimal taxes are characterized by $$\underline{\tau}_\ell=\frac{\underline{q}_\ell^0}{1-\lambda_\ell} \frac{z-\lambda_\ell}{2z-\lambda_\ell}.$$ \end{lem}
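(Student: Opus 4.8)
The plan is to exploit the fact that the Lagrangian $\mathcal{L}$ separates as a sum over $\ell$, the $\ell$-th summand involving only the single scalar $\underline{\tau}_\ell$. Hence the stationarity system $\partial\mathcal{L}/\partial\underline{\tau}_\ell=0$ splits into $N$ independent one-variable equations, each of which turns out to be \emph{linear} in $\underline{\tau}_\ell$ and therefore solvable in closed form.

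First I would fix notation to keep the algebra under control. Write $\gamma_\ell:=\sigma_\ell/(1-\sigma_\ell)$, so that the parenthesized expression appearing in $\mathcal{L}$ is precisely the post-intervention eigenbundle quantity $\underline{q}_\ell=\underline{q}_\ell^0+\gamma_\ell\underline{\tau}_\ell$. From the definition $\lambda_\ell=(1-\sigma_\ell)^{-1}$ one reads off the two identities that drive everything:
\[
\gamma_\ell=\lambda_\ell-1 \qquad\text{and}\qquad \frac{\gamma_\ell}{\sigma_\ell}=\frac{1}{1-\sigma_\ell}=\lambda_\ell .
\]
Since $\bm{B}$ is positive definite, $\bm{D}=-\bm{B}^{-1}$ is negative definite, so every $\sigma_\ell<0$ and $\lambda_\ell\in(0,1)$; in particular $\gamma_\ell=\lambda_\ell-1\neq 0$, so no division performed below is illegitimate.

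Next I would differentiate. Only the $\ell$-th term of each of the two sums in $\mathcal{L}$ contains $\underline{\tau}_\ell$, and differentiating the square and the product gives
\[
\frac{\partial\mathcal{L}}{\partial\underline{\tau}_\ell}
= -\frac{\gamma_\ell}{\sigma_\ell}\bigl(\underline{q}_\ell^0+\gamma_\ell\underline{\tau}_\ell\bigr)
+ z\bigl(\underline{q}_\ell^0+2\gamma_\ell\underline{\tau}_\ell\bigr).
\]
Substituting $\gamma_\ell/\sigma_\ell=\lambda_\ell$, setting the derivative to zero, and collecting the constant term and the coefficient of $\underline{\tau}_\ell$ separately yields the linear equation
\[
\underline{q}_\ell^0\,(z-\lambda_\ell) + \gamma_\ell\,\underline{\tau}_\ell\,(2z-\lambda_\ell)=0 .
\]
Solving for $\underline{\tau}_\ell$ and then replacing $-\gamma_\ell$ by $1-\lambda_\ell$ produces exactly $\underline{\tau}_\ell=\underline{q}_\ell^0(z-\lambda_\ell)\big/\bigl[(1-\lambda_\ell)(2z-\lambda_\ell)\bigr]$, as claimed.

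I do not expect a genuine obstacle: the content is one differentiation plus a linear solve, and the only thing to watch is the bookkeeping among the three intertwined scalars $\sigma_\ell$, $\gamma_\ell=\lambda_\ell-1$, and $\lambda_\ell$. The one point worth a line is that the displayed formula requires $2z-\lambda_\ell\neq 0$; but if that quantity vanished for some $\ell$, the same first-order condition would force $\underline{q}_\ell^0(z-\lambda_\ell)=0$ with $z=\lambda_\ell/2$, hence $z=0$ or $\underline{q}_\ell^0=0$, both excluded for generic $\bm{q}^0$ — and in any event this is subsumed by the bounds on $z$ derived later in the proof of Proposition~\ref{prop:global}.
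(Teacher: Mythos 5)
Your proof is correct and follows essentially the same route as the paper: differentiate the separable Lagrangian term by term, use $\sigma_\ell/(1-\sigma_\ell)=\lambda_\ell-1$ and $1/(1-\sigma_\ell)=\lambda_\ell$, and solve the resulting linear equation in $\underline{\tau}_\ell$; your $\gamma_\ell$ notation and the side remark on $2z-\lambda_\ell\neq 0$ are only cosmetic additions.
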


\begin{proof}
Writing out $\frac{\partial{\mathcal{L}}}{{\partial \underline{\tau}_\ell}}=0$ we have
\[
-\left(\underline{q}^0_\ell+\frac{\sigma_\ell}{1-\sigma_{\ell}}\underline{\tau}_\ell\right)\frac{1}{1-\sigma_\ell}+z\left(\underline{q}^0_\ell+2\frac{\sigma_\ell}{1-\sigma_\ell}\tau_\ell\right)=0
\]
or
\[
\underline{\tau}_\ell\frac{-\sigma_\ell}{1-\sigma_\ell}\left(\frac{1}{1-\sigma_\ell}-2z\right)=\underline{q}^0_\ell\left(\frac{1}{1-\sigma_\ell}-z\right)
\]
or
\[
\underline{\tau}_\ell\frac{-\sigma_\ell}{1-\sigma_\ell}=\underline{q}^0_\ell\frac{z(1-\sigma_\ell)-1}{2z(1-\sigma_\ell)-1}.
\]
Solving for $\underline{\tau}_\ell$ and rewriting in terms of $\lambda_\ell$ yields the claim. \end{proof}

\begin{lem} \label{lem:qfinal} The quantity of each eigenbundle at the planner's optimum is $$ \underline{q}_\ell = \underline{q}^0_\ell \frac{z}{2z-\lambda_\ell}.$$ \end{lem}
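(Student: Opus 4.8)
The plan is to read off $\underline{q}_\ell$ at the planner's optimum directly from the equilibrium relationship between post-intervention eigenbundle quantities and taxes, combined with the formula for the optimal taxes in Lemma \ref{lem:tau}. Recall that after combining the demand equation with the equilibrium-price equation (the step carried out just before the Lagrangian in the proof of Proposition \ref{prop:global}), the quantity of eigenbundle $\ell$ consumed after an intervention $\underline{\bm{\tau}}$ is $\underline{q}_\ell = \underline{q}^0_\ell + \tfrac{\sigma_\ell}{1-\sigma_\ell}\,\underline{\tau}_\ell$, where $\underline{q}^0_\ell$ is the pre-intervention quantity. This, together with Lemma \ref{lem:tau}, is all that is needed.

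First I would rewrite the coefficient $\tfrac{\sigma_\ell}{1-\sigma_\ell}$ in terms of the pass-through. Since $\lambda_\ell = (1-\sigma_\ell)^{-1}$, we have $\sigma_\ell = 1 - \lambda_\ell^{-1}$, hence $\tfrac{\sigma_\ell}{1-\sigma_\ell} = \sigma_\ell \lambda_\ell = \lambda_\ell - 1$. Thus the quantity relation becomes $\underline{q}_\ell = \underline{q}^0_\ell + (\lambda_\ell - 1)\,\underline{\tau}_\ell = \underline{q}^0_\ell - (1-\lambda_\ell)\,\underline{\tau}_\ell$. Now I substitute the optimal tax $\underline{\tau}_\ell = \tfrac{\underline{q}^0_\ell}{1-\lambda_\ell}\,\tfrac{z-\lambda_\ell}{2z-\lambda_\ell}$ from Lemma \ref{lem:tau}; the factor $1-\lambda_\ell$ cancels, leaving $\underline{q}_\ell = \underline{q}^0_\ell - \underline{q}^0_\ell\,\tfrac{z-\lambda_\ell}{2z-\lambda_\ell} = \underline{q}^0_\ell\,\tfrac{(2z-\lambda_\ell)-(z-\lambda_\ell)}{2z-\lambda_\ell} = \underline{q}^0_\ell\,\tfrac{z}{2z-\lambda_\ell}$, which is the claimed identity.

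There is no real analytic obstacle: the statement is a one-line algebraic consequence of Lemma \ref{lem:tau}. The only points requiring a little care are (i) using the correct quantity--tax relation — the coefficient on $\underline{\tau}_\ell$ is $\sigma_\ell/(1-\sigma_\ell) = \lambda_\ell - 1$, consistent with the term appearing inside the squared bracket in problem \eqref{eq:problem_for_lagrangian} — and (ii) checking that the cancellations are legitimate, i.e.\ that $1-\lambda_\ell \neq 0$ (true since $\lambda_\ell \neq 1$ in the relevant range) and $2z-\lambda_\ell \neq 0$ at the optimum. The latter follows from the bounds on $z$ established in Proposition \ref{prop:global}: generically $z \geq \lambda_n/2 \geq \lambda_\ell/2$ with strict inequality, so $2z-\lambda_\ell > 0$ and the expression is well defined. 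I would state this nonvanishing explicitly at the start of the computation so that the division is justified.
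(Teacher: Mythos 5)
Your proof is correct and follows essentially the same route as the paper: substitute the optimal tax from Lemma \ref{lem:tau} into the eigenbasis quantity relation $\underline{q}_\ell=\underline{q}^0_\ell+\tfrac{\sigma_\ell}{1-\sigma_\ell}\underline{\tau}_\ell$ and simplify, using $\tfrac{\sigma_\ell}{1-\sigma_\ell}=\lambda_\ell-1$. The only cosmetic difference is your explicit care about $1-\lambda_\ell\neq 0$ and $2z-\lambda_\ell\neq 0$; note that the latter is already implicit in the formula of Lemma \ref{lem:tau} itself (the bound $z>\lambda_\ell/2$ is only derived later in the paper), so citing Proposition \ref{prop:global} for it is a harmless forward reference rather than an independent justification.
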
 

\begin{proof} Using the formula $$\bm{\underline{q}}=\bm{\underline{q}}^{0}+\bm{\Sigma}\left[\bm{I}-\bm{\Sigma}\right]^{-1}\bm{\underline{\tau}}$$ in the principal component basis, along with the expression for optimal taxes from Lemma \ref{lem:tau},
\begin{align*} \underline{q}_\ell&=\underline{q}^0_\ell+\frac{\sigma_\ell}{1-\sigma_{\ell}}\underline{q}_\ell^0\frac{1}{1+\lambda_\ell}\frac{z-\lambda_\ell}{2z-\lambda_\ell}
\\ &=  \underline{q}^0_\ell \left(1+\frac{\sigma_\ell}{1-\sigma_{\ell}}\frac{1}{1-\lambda_\ell}\frac{z-\lambda_\ell}{2z-\lambda_\ell}\right) \\ &= \underline{q}_\ell^0  \left(1-\frac{z-\lambda_\ell}{2z-\lambda_\ell}\right)  \\ &= \underline{q}_\ell^0 \cdot \frac{z}{2z-\lambda_\ell}.  \end{align*}
\end{proof}

\begin{lem} \label{lem:taxfrombundle}
The tax revenue raised from the $\ell^{\text{th}}$ eigenbundle is:
$$ \underline{q}_\ell\underline{\tau}_\ell=  (\underline{q}_\ell^0)^2 \frac{1}{1-\lambda_\ell}\frac{z(z-\lambda_\ell)}{(2z-\lambda_\ell)^2}.$$
\end{lem}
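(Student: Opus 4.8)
The plan is simply to combine the two preceding lemmas. Lemma~\ref{lem:tau} gives the optimal target tax on the $\ell^{\text{th}}$ eigenbundle,
$$
  \underline{\tau}_\ell = \frac{\underline{q}_\ell^0}{1-\lambda_\ell}\,\frac{z-\lambda_\ell}{2z-\lambda_\ell},
$$
and Lemma~\ref{lem:qfinal} gives the post-intervention quantity of that eigenbundle,
$$
  \underline{q}_\ell = \underline{q}_\ell^0\,\frac{z}{2z-\lambda_\ell}.
$$
The revenue raised from eigenbundle $\ell$ is, by definition of a per-unit tax in the principal-component basis, the product $\underline{q}_\ell \underline{\tau}_\ell$. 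Multiplying the two displayed expressions yields
$$
  \underline{q}_\ell\underline{\tau}_\ell
  = \underline{q}_\ell^0\,\frac{z}{2z-\lambda_\ell}\cdot\frac{\underline{q}_\ell^0}{1-\lambda_\ell}\,\frac{z-\lambda_\ell}{2z-\lambda_\ell}
  = (\underline{q}_\ell^0)^2\,\frac{1}{1-\lambda_\ell}\,\frac{z(z-\lambda_\ell)}{(2z-\lambda_\ell)^2},
$$
which is exactly the claimed formula.

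There is essentially no obstacle here: the statement is a one-line algebraic consequence of Lemmas~\ref{lem:tau} and~\ref{lem:qfinal}, both already established. The only point worth a sentence of justification is that the revenue attributable to eigenbundle $\ell$ is indeed $\underline{q}_\ell\underline{\tau}_\ell$; this follows because the change of basis $\bm{U}$ is orthonormal, so that $\langle\bm{\tau},\bm{q}\rangle=\sum_\ell\underline{\tau}_\ell\underline{q}_\ell$, and the $\ell$-th summand is the contribution of that eigenbundle. (One could alternatively note that the lemma will be used only inside sums over $\ell$, so the decomposition is the natural one regardless.) No further case analysis or estimates are needed.
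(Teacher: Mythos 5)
Your proof is correct and is exactly the paper's argument: the paper likewise obtains the revenue formula immediately by multiplying the optimal-tax expression from Lemma~\ref{lem:tau} with the post-intervention quantity from Lemma~\ref{lem:qfinal}. Your added remark on why $\underline{q}_\ell\underline{\tau}_\ell$ is the right notion of per-eigenbundle revenue (orthonormality of $\bm{U}$) is a harmless and accurate clarification.
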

\begin{proof}
This is immediate using the expression for optimal taxes from \autoref{lem:tau} and the expression for quantities after the intervention from \autoref{lem:qfinal}.
\end{proof}

Using \Autoref{lem:taxfrombundle}, we can rewrite the binding budget constraint  $\sum_\ell \underline{q}_\ell\underline{\tau}_\ell=0$ as \begin{equation}\sum_\ell (\underline{q}_\ell^0)^2 \frac{1}{1-\lambda_\ell}\frac{z(z-\lambda_\ell)}{(2z-\lambda_\ell)^2}=0. \label{budget-rewritten}\end{equation}  As long as $\underline{\bm{q}}^0 \neq 0$,  it follows that either $z$ must be equal to some $\lambda_\ell$ (with  $\underline{q}_{\ell'}^0=0$ for all $\ell' \neq \ell$), or that $z-\lambda_\ell$ is positive for some $\ell$ and negative for some other $\ell$. In either case, it follows that $z \in [\lambda_1,\lambda_n].$ 

Each term of the sum induces an asymptote at $z=\lambda_\ell/2$. Let $\ell'$ be the smallest index so that $\underline{q}_{\ell'}^0 \neq 0$. (Recall $\lambda_n \leq \lambda_{n-1} \leq \ldots \leq \lambda_1$.) The left-hand side of the above equation asymptotes to $-\infty$ as $z$ decreases to $\lambda_{\ell'}/2$, and equals to $\sum_\ell (\underline{q}_\ell^0)^2\frac{1}{4(1-\lambda_\ell)}>0$ as $z$ increases to $+\infty$. Thus there is a solution $z\geq \lambda_{\ell'}/2$ of the rewritten budget constraint, \autoref{budget-rewritten}. Moreover, any solution $z >0$ of the equation along with the optimal tax formulas above gives (by the KKT theorem) a solution to our optimization problem. But the optimization problem is strictly convex, so that its solution must be unique, and thus there is a unique value of $z$ that solves \autoref{budget-rewritten}. 

If $\underline{q}_\ell \neq 0$ for all $\ell$, which holds generically, we can conclude that $$ \max\left(\lambda_n,\frac{\lambda_1}{2}\right)\leq z \leq \lambda_1 .$$ Indeed, all the inequalities discussed above must be strict.

Finally, since $2z-\lambda_\ell>0$ for each $\ell$ (by the above argument) we have from Lemma 2 that $\underline{q}_\ell$, the post-tax quantity, must have the same sign as $\underline{q}_\ell^0$, the pre-tax quantity.

This concludes the proof of Proposition \ref{prop:global}.

\end{appendix}

\end{document}